\documentclass[english,a4paper,nolineno]{lipics-v2019}
%\documentclass[english,nolineno]{socg-lipics-v2019}
%This is a template for producing LIPIcs articles. 
%See lipics-manual.pdf for further information.
%for A4 paper format use option "a4paper", for US-letter use option "letterpaper"
%for british hyphenation rules use option "UKenglish", for american hyphenation rules use option "USenglish"
% for section-numbered lemmas etc., use "numberwithinsect"
\newcommand{\R}{\mathbb{R}}

\newcommand{\nula}{\mathbf{0}}
\newcommand{\conv}{\mathop{\mathrm{conv}}}

\usepackage{microtype}%if unwanted, comment out or use option "draft"

%\graphicspath{{./graphics/}}%helpful if your graphic files are in another directory

\bibliographystyle{plainurl}% the recommnded bibstyle

\title{The Crossing Tverberg Theorem}

\titlerunning{The Crossing Tverberg Theorem}%optional, please use if title is longer than one line

\author{Radoslav Fulek}{IST Austria, Klosterneuburg,  Austria}{radoslav.fulek@gmail.com}{https://orcid.org/0000-0001-8485-1774}{The author gratefully acknowledges support from Austrian Science Fund (FWF), Project~ M2281-N35.}%mandatory, please use full name; only 1 author per \author macro; first two parameters are mandatory, other parameters can be empty.

\author{Bernd G{\"a}rtner}{Department of Computer Science, ETH Z{\"u}rich, Switzerland}{gaertner@inf.ethz.ch}{https://people.inf.ethz.ch/gaertner/}{}

\author{Andrey Kupavskii}{University of Oxford, UK; Moscow Institute of Physics and Technology, Russia}{kupavskii@ya.ru}{http://kupavskii.com}{The research was supported by the Advanced Postdoc.Mobility grant no. P300P2\_177839 of the Swiss National Science Foundation}

\author{Pavel Valtr}{Department of Applied Mathematics, Faculty of Mathematics and Physics, Charles University, Prague, Czech Republic\\ and \\ Department of Computer Science, ETH Z{\"u}rich, Switzerland}{}{http://orcid.org/0000-0002-3102-4166, https://kam.mff.cuni.cz/\textasciitilde{}valtr/}{Research by P. Valtr was supported by the grant no. 18-19158S of the Czech Science Foundation (GA\v CR)}

\author{Uli Wagner}{IST Austria, Klosterneuburg, Austria}{uli@ist.ac.at}{https://orcid.org/0000-0002-3435-0100}{}

\authorrunning{R. Fulek and B. G{\"a}rtner and A. Kupavskii and P. Valtr and U. Wagner}%mandatory. First: Use abbreviated first/middle names. Second (only in severe cases): Use first author plus 'et al.'

\Copyright{Radoslav Fulek and Bernd G{\"a}rtner and Andrey Kupavskii and Pavel Valtr and Uli Wagner}%mandatory, please use full first names. LIPIcs license is "CC-BY";  http://creativecommons.org/li censes/by/3.0/

\ccsdesc[100]{Mathematics of computing~Combinatoric problems} \ccsdesc[100]{Theory of computation~Computational geometry}

% \ccsdesc[100]{Mathematics of computing~Combinatoric problems}, \ccsdesc[100]{Theory of computation~Computational geometry}
% mandatory: Please choose ACM 2012 classifications from https://www.acm.org/publications/class-2012 or https://dl.acm.org/ccs/ccs_flat.cfm . E.g., cite as "General and reference $\rightarrow$ General literature" or \ccsdesc[100]{General and reference~General literature}. 

\keywords{Discrete geometry, Tverberg theorem, Crossing Tverberg theorem}%mandatory

\category{}%optional, e.g. invited paper

\relatedversion{}%optional, e.g. full version hosted on arXiv, HAL, or other respository/website

\supplement{}%optional, e.g. related research data, source code, ... hosted on a repository like zenodo, figshare, GitHub, ...

\funding{}%optional, to capture a funding statement, which applies to all authors. Please enter author specific funding statements as fifth argument of the \author macro.

\acknowledgements{Part of the research leading to this paper was done during the 16th Gremo Workshop on Open Problems (GWOP), Waltensburg, Switzerland, June 12-16, 2018. We thank Patrick Schnider for suggesting the problem, and Stefan Felsner, Malte Milatz and Emo Welzl for fruitful discussions during the workshop. We also thank Stefan Felsner and Manfred Scheucher for finding and communicating the example from Section~\ref{sec:stronger}. We thank D{\"o}m{\"o}t{\"o}r P{\'a}lv{\"o}lgyi and the SoCG reviewers for various helpful comments.
}%optional
%Editor-only macros:: begin (do not touch as author)%%%%%%%%%%%%%%%%%%%%%%%%%%%%%%%%%%
\EventEditors{Gill Barequet and Yusu Wang}
\EventNoEds{2}
\EventLongTitle{35th International Symposium on Computational Geometry (SoCG 2019)}
\EventShortTitle{SoCG 2019}
\EventAcronym{SoCG}
\EventYear{2019}
\EventDate{June 18--21, 2019}
\EventLocation{Portland, United~States}
\EventLogo{socg-logo}
\SeriesVolume{129}
\ArticleNo{38} % <- Your article number goes here

\nolinenumbers %uncomment to disable line numbering
\hideLIPIcs  %uncomment to remove references to LIPIcs series (logo, DOI, ...), e.g. when preparing a pre-final version to be uploaded to arXiv or another public repository
%%%%%%%%%%%%%%%%%%%%%%%%%%%%%%%%%%%%%%%%%%%%%%%%%%%%%%
\begin{document}

\maketitle

\begin{abstract}
The Tverberg theorem is one of the cornerstones of discrete geometry. It states that, given a set $X$ of at least $(d+1)(r-1)+1$ points in $\mathbb R^d$, one can find a partition $X=X_1\cup \ldots \cup X_r$ of $X$, such that the convex hulls of the $X_i$, $i=1,\ldots,r$, all share a common point. In this paper, we prove a strengthening of this theorem that guarantees a partition which, in addition to the above, has the property that the boundaries of full-dimensional convex hulls have pairwise nonempty intersections. Possible generalizations and algorithmic aspects are also discussed. 

As a concrete application, we show that any $n$ points in the plane in general position span $\lfloor n/3\rfloor$ vertex-disjoint triangles that are pairwise crossing, meaning that their boundaries have pairwise nonempty intersections; this number is clearly best possible. A previous result of Alvarez-Rebollar et al.\ guarantees $\lfloor n/6\rfloor$ pairwise crossing triangles. Our result generalizes to a result about simplices in $\mathbb R^d,d\ge2$.
 \end{abstract}

\section{Introduction and Results}
The following theorem was published by Johann Radon in 1921 \cite{Radon1921}: any set of $d+2$ points in $\mathbb R^d$ can be partitioned into two (disjoint) subsets, whose convex hulls intersect. In 1966, Helge Tverberg \cite{T1966_tverberg} proved the following important generalization of Radon's result.
\begin{theorem}[Tverberg~\cite{T1966_tverberg}]
\label{thm:Tverberg}
Let $X$ be a set of at least $(d+1)(r-1)+1$ points in $d$-space. Then $X$ can be partitioned into $r$ sets whose convex hulls all have a point $o$ in common.
(In the literature, the point $o$ is referred to as a \emph{Tverberg point} and the partition as a \emph{Tverberg partition}.)
\end{theorem}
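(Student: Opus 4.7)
The plan is to give \emph{Sarkaria's proof}, which reduces Tverberg's theorem to the colorful Carath\'eodory theorem via a tensor lifting. First I would lift each point $x_i \in \mathbb{R}^d$ (for $i=1,\ldots,N$ with $N=(d+1)(r-1)+1$) to $\bar x_i := (x_i,1) \in \mathbb{R}^{d+1}$, and pick auxiliary vectors $v_1,\ldots,v_r \in \mathbb{R}^{r-1}$ summing to zero such that any $r-1$ of them are linearly independent (for instance, the vertices of a regular simplex centered at the origin). These vectors serve as a linear-algebraic ``colouring'' that will encode the eventual partition.

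Next, for each index $i$ I would form the colour class
\[
P_i := \{\,\bar x_i \otimes v_j : j=1,\ldots,r\,\} \subset \mathbb{R}^{(d+1)(r-1)}.
\]
Since $\sum_j \bar x_i \otimes v_j = \bar x_i \otimes \bigl(\sum_j v_j\bigr) = 0$, the origin lies in $\conv(P_i)$ with uniform weights $1/r$. We have exactly $N=(d+1)(r-1)+1$ such classes in a space of dimension $(d+1)(r-1)$, which matches the hypothesis of the colorful Carath\'eodory theorem. That theorem thus supplies indices $j(i) \in \{1,\ldots,r\}$ and coefficients $\lambda_i \ge 0$ with $\sum_i \lambda_i=1$ satisfying
\[
\sum_{i=1}^{N} \lambda_i\, \bar x_i \otimes v_{j(i)} = 0.
\]

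Finally, I would extract the Tverberg partition from this identity. Setting $X_j := \{x_i : j(i)=j\}$ and $y_j := \sum_{i \in X_j} \lambda_i \bar x_i \in \mathbb{R}^{d+1}$, the relation above rewrites as $\sum_j y_j \otimes v_j = 0$. Substituting $v_r = -\sum_{j<r} v_j$ and using the linear independence of $v_1,\ldots,v_{r-1}$ forces $y_1 = \cdots = y_r$. Reading off the last coordinate gives $\sum_{i \in X_j} \lambda_i = 1/r$ for every $j$, so in particular each $X_j$ is nonempty; the first $d$ coordinates of $r y_j$ then express a single point $o \in \mathbb{R}^d$ as a convex combination of the points in $X_j$ for every $j$, which is the desired Tverberg point.

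The main obstacle is spotting the tensor construction itself; once the map $x_i \mapsto \bar x_i \otimes v_j$ is in place, everything else is routine linear algebra plus one invocation of colorful Carath\'eodory. A minor technicality is arguing that no block $X_j$ is empty, but this is immediate from the coefficient sums being $1/r \neq 0$ rather than $0$.
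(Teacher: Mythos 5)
Your proof is correct, but note that the paper does not prove Theorem~\ref{thm:Tverberg} at all: it is quoted as a known result with a citation to Tverberg's original 1966 paper, whose argument (a clever ``moving points'' induction) is quite different from what you propose. What you give is Sarkaria's proof via the tensor trick and the colorful Carath\'eodory theorem, in the streamlined B\'ar\'any--Onn form, and the steps check out: the classes $P_i$ each contain $\nula$ in their convex hull with weights $1/r$, the dimension count $(d+1)(r-1)$ versus $N=(d+1)(r-1)+1$ classes matches the colorful Carath\'eodory hypothesis, the substitution $v_r=-\sum_{j<r}v_j$ together with linear independence of $v_1,\dots,v_{r-1}$ does force $y_1=\cdots=y_r$, and the last-coordinate bookkeeping gives block weights $1/r$, hence nonempty blocks and a common point $o$ after rescaling by $r$. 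This buys a short, modern, self-contained argument at the price of invoking colorful Carath\'eodory as a black box, whereas the paper simply outsources the theorem to the literature. Two cosmetic points you should tidy: the statement allows $|X|$ to exceed $(d+1)(r-1)+1$, so you should say explicitly that you run the argument on a subset of exactly $N$ points and assign the remaining points to arbitrary parts (which does not disturb $o\in\conv(X_j)$); and your sum $\sum_{i\in X_j}\lambda_i\bar x_i$ indexes over points rather than indices, which is a harmless but fixable abuse of notation.
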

Radon's theorem covers the case $r=2$.
Figure~\ref{fig:tverberg_partition} illustrates the Tverberg theorem for $d=2$ and $r=3$.
\begin{figure}[htb]
\includegraphics[width=\textwidth]{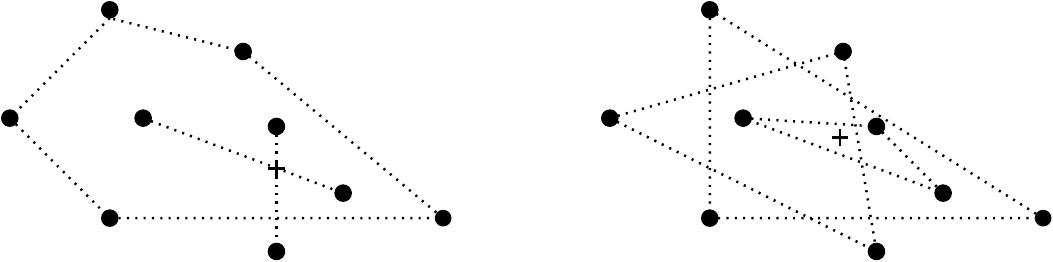}
\caption{The Tverberg theorem for $d=2$ and $r=3$: any set of at least $7$ points can be partitioned into three sets whose convex hulls all have a point in common. Our example uses $9$ points and shows two Tverberg partitions as well as corresponding Tverberg points. Many other Tverberg partitions exist in this example.\label{fig:tverberg_partition}}
\end{figure}

This theorem largely influenced the course of discrete geometry and spurred a lot of research in the area. We do not go into more details in this paper and refer the reader to a recent survey by B{\'{a}}r{\'{a}}ny and Sober{\'{o}}n~\cite{Barany2018}.

Throughout this paper, a set of points in $\R^d$ is said to be in \emph{general position} if no $d+1$ points are on a comon hyperplane (for example, no three points on a line in the plane).  

There is a major open question from the field of geometric graphs, motivating our work. In \cite{Aronov1994}, Aronov et al.\  conjectured that there exists an absolute constant $c>0$, such that, given any set of $n$ points in general position in the plane, one can find at least $cn$ disjoint pairs among them such that their connecting segments cross pairwise. Such a collection of segments is called a \emph{crossing family}. Despite considerable interest in this problem, the best published bound still comes from the original paper \cite{Aronov1994}, stating that one can always find a crossing family of size at least $c\sqrt n$ for some absolute $c>0$.
Only recently, after about 25 years, J.~Pach, N.~Rubin, and G.~Tardos improved this lower bound to a nearly linear one~\cite{PRT19_crossing}.

In another attempt to approach this problem, Alvarez-Rebollar et al.\ asked whether one can find at least $cn$ disjoint \emph{triples} whose connecting \emph{triangles} cross pairwise~\cite{RLU2016_crossingFam}. Throughout this paper, we say that two convex polytopes in $\R^d$ \emph{cross} if their boundaries have a non-empty intersection. We remark that if a convex polytope in $\R^d$ is not full-dimensional then it has no interior and therefore coincides with its boundary. Our main results below would be false, in the absence of general position, if relative boundaries were considered in the above definition of crossing (an easy counterexample in this situation is a set of points lying on a line in $\R^2$). 

We also point out that our definition of ``crossing'' is different from a classical one due to Fejes-T\'oth for general convex sets~\cite{FT67}. According to the latter, two convex sets $C,C'$ cross if neither $C\setminus C'$ nor $C'\setminus C$ are connected. Under general position, our notion of crossing convex polytopes generalizes the notion of a crossing of two vertex-disjoint segments in a crossing family (the two segments intersecting in their relative interiors)~\cite{Aronov1994}. It also generalizes the notion of Alvarez-Rebollar et al.\ for vertex-disjoint triangles (an edge of one triangle crossing an edge of the other)~\cite{RLU2016_crossingFam}.

Alvarez-Rebollar et al.\ showed the following: For every finite point set of size $n$ in the plane in general position, there exist $\left\lfloor\frac{n}{6}\right\rfloor$ vertex-disjoint and pairwise crossing triangles with vertices in $P$. As at most $\left\lfloor\frac{n}{3}\right\rfloor$ disjoint triangles can be found, this leaves a factor-2 gap.

If we only want triangles that have a common point, this gap can be closed using a simple strengthening of the Tverberg theorem for point sets of size at most $(d+1)r$ that we present next. However, these triangles might not be pairwise crossing, since triangles can be nested; see Figure~\ref{fig:tverberg_partition} (right).

\begin{theorem}\label{thm:Tverberg2}
Let $X$ be a set of at least $(d+1)(r-1)+1$ and at most $(d+1)r$ points in $d$-space. Then $X$ can be partitioned into $r$ disjoint sets $X_1,\ldots, X_r$ of size at most $d+1$, whose convex hulls all have a point in common. 
\end{theorem}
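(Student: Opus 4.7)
The plan is to derive Theorem~\ref{thm:Tverberg2} from Theorem~\ref{thm:Tverberg} together with Carath\'eodory's theorem, using a simple counting argument to redistribute leftover points.

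First, I would apply Theorem~\ref{thm:Tverberg} to obtain a Tverberg partition $X = Y_1 \cup \cdots \cup Y_r$ with a common point $o \in \bigcap_{i=1}^r \conv(Y_i)$. In general the blocks $Y_i$ can have size greater than $d+1$, so this partition need not yet satisfy the conclusion. The next step is to trim each block: since $o \in \conv(Y_i) \subseteq \R^d$, Carath\'eodory's theorem yields a subset $Y_i' \subseteq Y_i$ with $|Y_i'| \le d+1$ and $o \in \conv(Y_i')$. Collecting these trimmed blocks gives $r$ pairwise disjoint sets of size at most $d+1$ whose convex hulls contain $o$, but they do not in general exhaust $X$.

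The remaining task is to put the unused points $R := X \setminus \bigcup_{i=1}^r Y_i'$ back into the blocks without violating the size bound $d+1$. Any such reinsertion is safe, because enlarging $Y_i'$ can only enlarge $\conv(Y_i')$, so $o$ stays in every hull. The key observation is a counting one: the total free capacity is
\[
\sum_{i=1}^r \bigl(d+1 - |Y_i'|\bigr) \;=\; r(d+1) - \sum_{i=1}^r |Y_i'| \;\ge\; |X| - \sum_{i=1}^r |Y_i'| \;=\; |R|,
\]
where the inequality uses the hypothesis $|X| \le (d+1)r$. Hence we can greedily distribute the points of $R$ into the blocks one at a time, always inserting into some $Y_i'$ with $|Y_i'| < d+1$, until every point of $X$ is placed. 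The resulting partition $X_1 \cup \cdots \cup X_r$ satisfies $|X_i| \le d+1$ for all $i$ and still has $o$ as a common point of the convex hulls.

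There is no real obstacle here beyond making the bookkeeping explicit; the only thing to double-check is the capacity inequality, which is exactly where the extra hypothesis $|X| \le (d+1)r$ is used. The lower bound $|X| \ge (d+1)(r-1)+1$ is needed only to invoke Theorem~\ref{thm:Tverberg} in the first place.
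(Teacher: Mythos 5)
Your proposal is correct and follows essentially the same route as the paper: apply Tverberg's theorem, trim each block to size at most $d+1$ via Carath\'eodory's theorem while keeping the common point, and redistribute the removed points into blocks with spare capacity, which exists precisely because $|X|\le (d+1)r$. Your explicit capacity inequality is a slightly more careful write-up of the paper's ``fill up the smaller sets'' step, but the argument is identical.
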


To prove this, we apply Theorem~\ref{thm:Tverberg} to $X$. We get sets $X'_1,\ldots, X'_r$ whose convex hulls contain a common point, say, the origin. Using Carath\' eodory's theorem, from every $X'_i$ of size larger than $d+1$ we can select $d+1$ points $X_i\subseteq X'_i$, whose convex hull still contains the origin. Finally, some of the sets $X'_i$ of size smaller than $d+1$ are filled up to size $d+1$ with the points removed from other $X'_i$'s.  The origin is still a common point for all $\conv(X_i)$'s.

The main result of this paper provides a crossing version of the Tverberg Theorem~\ref{thm:Tverberg2}.
%, also closing the gap in the result of Rebollar et al.~\cite{RLU2016_crossingFam}, and generalizing it to higher dimensions.

\begin{theorem} \label{thm:main} Let $X$ be a set of at least $(d+1)(r-1)+1$ and at most $(d+1)r$ points in $d$-space. Then $X$ can be partitioned into $r$ disjoint sets $X_1,\ldots, X_r$ of size at most $d+1$, whose convex hulls all have a point in common. Moreover, for any $X_i,X_j$ of size $d+1$, $\conv (X_i)$ and $\conv (X_j)$ cross, meaning that their boundaries have a non-empty intersection.
\end{theorem}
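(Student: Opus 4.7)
The plan is to begin with a Tverberg partition $X = X_1 \cup \cdots \cup X_r$ sharing a common point $o$, as furnished by Theorem~\ref{thm:Tverberg2}, and then to massage it by local modifications until the crossing property also holds, without losing the common point. The key geometric reduction is the following observation: if two full-dimensional convex bodies in $\mathbb{R}^d$ share a common point, then either their boundaries meet or one body strictly contains the other. (Indeed, if the boundaries were disjoint, then the boundary of one body, being connected, would lie entirely in the interior or the exterior of the other, and in either case one containment follows by convexity.) Thus the problem reduces to eliminating all nested pairs $\conv X_j \subsetneq \conv X_i$ among parts of size $d+1$.

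To produce such a non-nested partition, I would optimize over the finite set of Tverberg partitions of $X$ into parts of size at most $d+1$ with some common point. Natural candidate potentials are $\Phi = \sum_i \mathrm{vol}(\conv X_i)$ (sum of simplex volumes, over full-dimensional parts) or $\Phi = \prod_i \mathrm{vol}(\conv X_i)$, which by AM--GM penalizes unbalanced, hence nested, configurations. The claim to be established is that any $\Phi$-maximizing partition automatically satisfies the crossing property.

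The heart of the proof is then a vertex-exchange argument by contradiction: suppose a $\Phi$-maximizer has $\conv X_j \subsetneq \conv X_i$ with $|X_i| = |X_j| = d+1$. Select a vertex $p \in X_j$; since $p \in \conv X_i$, write $p = \sum_{v \in X_i} \lambda_v v$ and $o = \sum_{v \in X_i} \mu_v v$ as convex combinations. A Carath\'eodory-type substitution---picking the $u \in X_i$ with $\lambda_u > 0$ that minimizes the ratio $\mu_u/\lambda_u$---substitutes $u$ out of the expression for $o$ and establishes $o \in \conv((X_i \setminus \{u\}) \cup \{p\})$, so the swap $p \leftrightarrow u$ preserves $o$ on the $X_i$-side. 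The symmetric condition $o \in \conv((X_j \setminus \{p\}) \cup \{u\})$ is not automatic; it should be analyzed via the barycentric coordinates of $u$ with respect to $X_j$, where the strict inclusion $\conv X_j \subsetneq \conv X_i$ is meant to supply enough slack around $o$ to absorb the replacement, provided $p$ and $u$ are chosen in a coordinated way.

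The main obstacle will be carrying out this swap so that $\Phi$ strictly increases: a single vertex exchange may fail to preserve the Tverberg property when $o$ lies pathologically close to the removed vertex, and even when it succeeds, the decrease of $\mathrm{vol}(\conv X_i)$ (since $p$ is inside $\conv X_i$) must be more than compensated by the increase of $\mathrm{vol}(\conv X_j)$ (since $u$ is outside $\conv X_j$). Possible remedies include simultaneous exchange of several vertex pairs, an initial general-position perturbation of $X$, or refining $\Phi$ by a lexicographic tiebreaker (e.g.\ first maximize the number of full-dimensional parts, then the sum of volumes). Once a strict-increase swap is in hand, termination is automatic by finiteness of the partition family, and the non-full-dimensional parts that may appear when $|X| < (d+1)r$ carry no crossing constraint and are disposed of by an initial reduction.
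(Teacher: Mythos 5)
Your overall frame (start from a Tverberg partition from Theorem~\ref{thm:Tverberg2}, observe that two full-dimensional convex hulls sharing the Tverberg point either cross or are nested, then eliminate nested pairs by local repartitioning, with termination via a potential) matches the paper's strategy, but the heart of the argument is missing. The crucial step is precisely the unnesting statement: given a nested pair $T,T'$ of $(d+1)$-sets with $\nula\in\conv(T)\cap\conv(T')$, one can repartition $T\cup T'$ into two $(d+1)$-sets whose hulls still both contain $\nula$ and which are not nested. The paper gets this (Lemma~\ref{lem:unnesting}) not by any vertex exchange, but by a parity argument: Lemma~\ref{lemparity} shows that, in general position, the number of bipartitions of the $2(d+1)$ points into two $(d+1)$-sets each containing the origin is \emph{even} (proved via the cocycle property of origin-containing $(d+1)$-sets and a double-counting argument, Theorem~\ref{prop2}); hence besides the nested pair there is another such bipartition, and it cannot be nested because the outer simplex of a nested pair is forced to be $\conv(T\cup T')$. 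Your proposal replaces this by a single-vertex swap $p\leftrightarrow u$, and you yourself flag that the condition $o\in\conv((X_j\setminus\{p\})\cup\{u\})$ is not automatic; indeed no choice of a single transposition need preserve the Tverberg point in general, and nothing in your sketch supplies the needed existence statement. This is the genuine gap: the idea that makes the repartitioning possible at all (the parity/cocycle argument, or any substitute for it) is absent.

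The potential-function part is also not in working order as stated. You propose maximizing $\Phi=\sum_i\mathrm{vol}(\conv X_i)$ (or the product) and claim a maximizer must be crossing, but the natural unnesting operation replaces the outer simplex, which is the volume-wise largest simplex spanned by the $2(d+1)$ points, by two strictly smaller simplices; so the sum of volumes can \emph{decrease} under a fix, and a $\Phi$-maximizer may well still contain nested pairs --- the claimed implication is unsupported. The paper instead argues termination by showing that the sorted sequence of simplex volumes strictly decreases in lexicographic order at every fix (equivalently, one could minimize rather than maximize, but only after the existence of a Tverberg-point-preserving unnesting move is established). Finally, your treatment of degeneracies is only a remark; the paper makes the general-position reduction rigorous via a limiting/closedness argument, which you would still need to spell out since a size-$(d+1)$ part need not be full-dimensional without it.
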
 

We call such a partition a \emph{crossing Tverberg partition}. An easy calculation shows that the number of sets with exactly $d+1$ elements is at least $|X|-dr\in\{r-d,r-d+1,\ldots,r\}$. In particular, for sets $X$ of size exactly $(d+1)r$, we immediately deduce the following simpler-looking corollary.

\begin{corollary}\label{cor1} Let $X$ be a set of $(d+1)r$ points in $d$-space. Then $X$ can be partitioned into $r$ sets $X_1,\ldots, X_r$ of size $d+1$, whose convex hulls all have a point in common and such that for any $i,j\in [n]$, $\conv (X_i)$ and $\conv (X_j)$ cross, meaning that their boundaries have a non-empty intersection.  
\end{corollary}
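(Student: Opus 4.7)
The plan is to combine Theorem~\ref{thm:Tverberg2} with an extremal argument based on vertex swaps. First, I would apply Theorem~\ref{thm:Tverberg2} to obtain a Tverberg partition $X = X_1 \cup \cdots \cup X_r$ with $|X_i| \le d+1$ and a common point $o$; by a limiting argument we may assume the points are in general position and that $o$ lies in the relative interior of each $\conv(X_i)$. The key observation is that, for two full-dimensional simplices both containing $o$ in their interiors, crossing is equivalent to neither being contained in the other; hence the goal reduces to producing a Tverberg partition without any pair of nested full simplices.

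I would proceed by choosing, among all Tverberg partitions of $X$ into $r$ classes of size at most $d+1$, one that is extremal with respect to a potential $\Phi$ penalizing nesting. A natural candidate is $\Phi = \sum_i \mathrm{vol}_d(\conv(X_i))$, to be maximized: nested full simplices leave volume on the table, and a swap between the outer and inner simplex should recover it. The central technical step is a \emph{swap lemma}: if $\conv(X_j) \subseteq \conv(X_i)$ are full simplices both containing $o$, then there exist $v \in X_i$ and $u \in X_j$ such that exchanging $u$ and $v$ yields a new Tverberg partition (possibly with a slightly perturbed common point $o'$) in which $\Phi$ strictly increases.

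Concretely, to construct the swap I would write $o = \sum_{w \in X_i} \lambda_w w$ in barycentric coordinates on $X_i$ and, for each $u \in X_j$, express $u = \sum_{w \in X_i} \alpha_{w,u} w$ (possible because $u \in \conv(X_i)$). A minimum-ratio pivot in the style of the simplex method, selecting $v$ that minimizes $\lambda_v/\alpha_{v,u}$, guarantees $o \in \conv(X_i \setminus \{v\} \cup \{u\})$. The symmetric requirement $o' \in \conv(X_j \setminus \{u\} \cup \{v\})$ is the harder half and is the main obstacle; I would address it by exploiting that the set of Tverberg points of a given combinatorial partition is open and convex in general position (so $o'$ is allowed to drift), and by arguing that among the $(d+1)^2$ candidate swap pairs at least one satisfies both pivot conditions simultaneously. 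Once the swap lemma is in place, iteration terminates since the number of partitions of $X$ into $r$ classes is finite and $\Phi$ strictly increases at each step, yielding the desired crossing Tverberg partition.
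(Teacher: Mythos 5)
Your high-level outline (start from Theorem~\ref{thm:Tverberg2}, reduce crossing to non-nesting, then repair nested pairs) matches the paper, but the step you yourself flag as ``the harder half'' is precisely the crux, and it is left as an assertion rather than proved. You claim a \emph{swap lemma}: for nested simplices there exist $u\in X_j$, $v\in X_i$ whose exchange keeps a common point. The minimum-ratio pivot only handles the containment $o\in\conv\bigl((X_i\setminus\{v\})\cup\{u\}\bigr)$; nothing in the proposal forces the symmetric containment for the other class, and ``among the $(d+1)^2$ candidate pairs at least one works'' is exactly what would need a proof. The paper does \emph{not} prove such a single-swap statement; instead it proves Lemma~\ref{lem:unnesting} via a parity argument (Lemma~\ref{lemparity}, established through the cocycle double-counting of Theorem~\ref{prop2}): the number of partitions of the $2(d+1)$ points into two $(d+1)$-sets \emph{both containing the same point $\nula$} is even, so besides the nested pair there is a second admissible pair, and at most one admissible pair can be nested because the outer simplex of a nested pair must be $\conv(T\cup T')$. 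The repartition obtained this way is in general not a transposition of one vertex from each class (only the planar geometric proof produces an actual swap of two ``consecutive'' points), so your lemma is a strictly stronger claim with no supporting argument. Moreover, allowing the common point to drift to some $o'$ is dangerous: $o'$ must also lie in $\conv(X_k)$ for all $r-2$ untouched classes, and the set $\bigcap_k\conv(X_k)$ is closed, not open; the paper sidesteps this entirely by keeping the Tverberg point $o$ fixed exactly, which is what makes the fixing operation composable (this is also the content of Theorem~\ref{thm:main-stronger}).

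There is a second gap in the termination/extremality argument. You maximize $\Phi=\sum_i\mathrm{vol}_d(\conv(X_i))$ and assert that unnesting ``recovers volume.'' But after repairing a nested pair, both new simplices are strictly smaller than the old outer simplex (which is the unique volume-maximal simplex on the $2(d+1)$ points, since all of them lie inside it), so the \emph{sum} of the two volumes can easily decrease; no monotonicity of $\Phi$ is established, and it is not clear any swap increasing $\Phi$ exists even when some unnesting repartition does. The paper's termination argument is different and does work: order all full-dimensional simplex volumes decreasingly and observe that each fix replaces the largest volume involved by two strictly smaller ones, so the sorted sequence strictly decreases in (decreasing) lexicographic order. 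If you replace your potential by this lexicographic measure and replace the unproved swap lemma by the parity-based Lemma~\ref{lem:unnesting}, your extremal formulation becomes equivalent to the paper's iterative argument; as written, however, both the existence of the improving move and the improvement of the potential are unsubstantiated.
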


We also obtain an optimal strenghtening of the result by Alvarez-Rebollar et al.~\cite{RLU2016_crossingFam} that moreover generalizes to all dimensions $d\geq 2$.

\begin{corollary}
For every finite point set $X$ of size $n$ in the plane in general position, there exist $\lfloor n/3\rfloor$ vertex-disjoint and pairwise crossing triangles with vertices in $X$.

More generally, for every finite point set $X$ of size $n$ in $\mathbb R^d$ in general position, there exist $\lfloor n/(d+1)\rfloor$ vertex-disjoint and pairwise crossing simplices with vertices in $X$.
\end{corollary}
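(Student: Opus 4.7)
The plan is to reduce the corollary to Theorem~\ref{thm:main} by a simple arithmetic setup. Set $r := \lfloor n/(d+1)\rfloor$ and let $s := n-(d+1)r$, so that $0\le s\le d$. I would pick an arbitrary subset $X'\subseteq X$ of size exactly $(d+1)r$, discarding the $s$ excess points. Then $X'$ satisfies the hypothesis of Theorem~\ref{thm:main}, since $(d+1)(r-1)+1\le (d+1)r = |X'|\le (d+1)r$.

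Next I would invoke Theorem~\ref{thm:main} to obtain a partition $X' = X_1\cup\cdots\cup X_r$ into $r$ sets, each of size at most $d+1$, whose convex hulls share a common point, and with the property that any two parts of size $d+1$ have crossing convex hulls. The first key observation is that all parts must in fact have size exactly $d+1$: since $\sum_i |X_i| = (d+1)r$ with each $|X_i|\le d+1$, any deficit in one part would force another part to exceed $d+1$. Hence all $r$ parts are $(d+1)$-element sets, and each $\conv(X_i)$ is a simplex with vertices in $X$.

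The second observation is that the general position assumption guarantees that each such simplex $\conv(X_i)$ is full-dimensional (no $d+1$ points on a common hyperplane), so the crossing conclusion of Theorem~\ref{thm:main} applies to every pair. Thus the $r = \lfloor n/(d+1)\rfloor$ simplices are vertex-disjoint and pairwise crossing, which is precisely the conclusion of the corollary.

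I do not expect any real obstacle here; the only thing to verify carefully is that the size constraints of Theorem~\ref{thm:main} are matched by choosing $|X'| = (d+1)r$, forcing every part of the Tverberg partition to be a full simplex, and that general position upgrades the crossing statement about ``sets of size $d+1$'' to a crossing statement about every pair of the produced simplices.
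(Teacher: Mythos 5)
Your proposal is correct and matches the paper's own derivation: the paper likewise removes the $n \bmod (d+1)$ excess points and applies Theorem~\ref{thm:main} to the remaining $(d+1)\lfloor n/(d+1)\rfloor$ points, so that every part has size exactly $d+1$ and the crossing conclusion covers all pairs. Nothing further is needed.
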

To derive this from Theorem~\ref{thm:main}, we remove $n\mod (d+1)$ points from $X$ and then apply Theorem~\ref{thm:main} on the remaining set of $(d+1)\left\lfloor\frac{n}{d+1}\right\rfloor$ points.

Finally, we get a crossing version of the actual Tverberg Theorem~\ref{thm:Tverberg}, for point sets of arbitrarily large size.

\begin{theorem}[Crossing Tverberg theorem]\label{thm:crossing_tverberg}
Let $X$ be a set of at least $(d+1)(r-1)+1$ points in $d$-space. Then $X$ can be partitioned into $r$ sets whose convex hulls all have a point in common. Moreover, for any $X_i,X_j$ of size at least $d+1$, $\conv (X_i)$ and $\conv (X_j)$ cross, meaning that their boundaries have a non-empty intersection.  
\end{theorem}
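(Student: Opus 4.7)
The plan is to derive Theorem~\ref{thm:crossing_tverberg} from Theorem~\ref{thm:main} by induction on $|X|$. The base case is $|X|\le (d+1)r$, which follows directly from Theorem~\ref{thm:main}: the resulting partition has all parts of size at most $d+1$, so any part of size at least $d+1$ has size exactly $d+1$, and the crossing condition of Theorem~\ref{thm:main} covers precisely the pairs required here.

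For the inductive step, assume $|X|>(d+1)r$, pick any point $z\in X$, and apply the inductive hypothesis to $X\setminus\{z\}$ to obtain a crossing Tverberg partition $Y_1,\ldots,Y_r$ with common Tverberg point $o$. We then must add $z$ to some part $Y_i$ while maintaining both properties. The Tverberg property is automatic since $o\in \conv(Y_i)\subseteq \conv(Y_i\cup\{z\})$; the delicate issue is preserving the crossing property. Since adding $z$ can only enlarge $\conv(Y_i)$, a crossing with some $\conv(Y_j)$ (both hulls containing $o$, hence intersecting) is destroyed only if $\conv(Y_j)$ becomes engulfed in the interior of the enlarged hull; the converse inclusion $\conv(Y_i\cup\{z\})\subseteq \mathrm{int}(\conv(Y_j))$ is blocked by the original crossing of $Y_i$ and $Y_j$. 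In particular, whenever $z\in\conv(Y_i)$ for some $i$, assigning $z$ to that part leaves the convex hull unchanged and all crossings trivially persist.

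\emph{The main obstacle} is the case when $z$ lies outside every $\conv(Y_i)$. The plan here is to assign $z$ to the part whose boundary is hit farthest from $o$ along the ray from $o$ through $z$, so that the enlargement of $\conv(Y_i)$ is confined to a narrow cone near $z$; a geometric argument, exploiting that every other $\conv(Y_j)$ still contains $o$ and extends in directions not aligned with this cone, should then show that no $\conv(Y_j)$ is swallowed. Formalizing this step rigorously -- and also handling the edge case in which the new assignment turns a previously small part into one of size exactly $d+1$, which must then be checked to cross every other size-$\ge d+1$ part -- is the technical core of the argument.
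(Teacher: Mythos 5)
Your overall strategy coincides with the paper's: reduce to a crossing Tverberg partition of $(d+1)r$ of the points via Theorem~\ref{thm:main}/Corollary~\ref{cor1}, then insert the leftover points one at a time, noting that the Tverberg point survives automatically, that a point lying in some hull can be added to that part, and that a crossing can only be destroyed by one hull swallowing another. However, the step you yourself flag as ``the technical core'' -- the rule for inserting a point $z$ that lies outside all current hulls -- is not only unproven but, as stated, false. Your rule (add $z$ to the part whose boundary is hit farthest from $o$ along the ray $o z$) must work for an arbitrary crossing Tverberg partition handed to you by the induction, and it does not. Take $d=2$, $r=2$, $o$ the origin, $Y_i=\{(-1,-1),(1,-1),(0,1)\}$, $Y_j=\{(0.55,0),(-0.05,0.05),(-0.05,-0.05)\}$, and $z=(0.5,5)$. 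Both triangles contain $o$ and their boundaries cross (the vertex $(0.55,0)$ lies just outside the edge $2x+y=1$ of $\conv(Y_i)$, the other two vertices of $Y_j$ lie well inside). The ray from $o$ through $z$ leaves $\conv(Y_i)$ at distance about $0.84$ but leaves $\conv(Y_j)$ already at distance about $0.05$, so your rule assigns $z$ to $Y_i$; yet $\conv(Y_i\cup\{z\})=\conv\{(-1,-1),(1,-1),(0.5,5)\}$ contains all three vertices of $Y_j$ in its interior, so $Y_j$ is swallowed and the crossing is lost. The intuition that the enlargement is ``confined to a narrow cone near $z$'' fails precisely because the added cone can cover the small region where the other part pokes out.

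The paper's insertion rule avoids this with a one-line argument: among the sets $\conv(Y_1\cup\{z\}),\ldots,\conv(Y_r\cup\{z\})$ choose an inclusion-minimal one, say for index $k$, and add $z$ to $Y_k$. If $\conv(Y_k\cup\{z\})$ swallowed some $\conv(Y_j)$, then $\conv(Y_j\cup\{z\})\subsetneq\conv(Y_k\cup\{z\})$, contradicting minimality. (In the example above this rule correctly assigns $z$ to $Y_j$.) Replacing your ray-based rule by this inclusion-minimality rule repairs the inductive step; note also that your worry about a small part growing to size exactly $d+1$ is moot in your own setup, since the base case is invoked on exactly $(d+1)r$ points and hence all parts already have size $d+1$ before any insertion.
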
 

This is also easy to prove, using Theorem~\ref{thm:main} and Corollary~\ref{cor1}. If $n:=|X|\leq (d+1)r$, we apply Theorem~\ref{thm:main}. Otherwise, we apply Corollary~\ref{cor1} to an arbitrary subset $Y\subset X$ of size $(d+1)r$, resulting in a crossing Tverberg partition into $r$ sets $Y_1,\ldots,Y_r$ of size $d+1$ each. Now we consecutively add the remaining points to suitable sets in such a way that the crossings between the convex hulls are maintained (the Tverberg point automatically remains valid). Suppose that some points have already been added, resulting in sets $X'_1,\ldots,X'_r$ whose convex hulls still cross pairwise. If the next point $p$ is contained in one of these convex hulls, we simply add it to the corresponding set. Otherwise, we select an inclusion-minimal set $\conv(X'_k\cup\{p\})$ among the (different) sets $\conv(X'_i\cup\{p\}), i=1,\ldots,r$, and add $p$ to $X'_k$.  We claim that this cannot result in nested convex hulls. For this, we only have to rule out that $\conv(X'_k\cup\{p\})$ ``swallows'' some other $\conv(X'_i)$. But this cannot happen, as otherwise $\conv(X'_i\cup\{p\})\subset \conv(X'_k\cup\{p\})$, contradicting the choice of $k$.

It remains to prove Theorem~\ref{thm:main}. 
We start with a Tverberg partition according to Theorem~\ref{thm:Tverberg2}, i.e.\ a partition into sets of size at most $d+1$ such that their convex hulls have a point in common. Such a partition might look like in Figure~\ref{fig:tverberg_partition2} (left). If the full-dimensional convex hulls (which are simplices) cross pairwise, we are done. In general, however, we still have pairs of nested simplices. As long as this is the case, we \emph{fix} one pair of nested simplices at a time until no pair of nested simplices exists anymore, and our desired crossing Tverberg partition is obtained.

By fixing, we mean that we repartition the $2(d+1)$ points involved in the two simplices in such a way that the resulting simplices are not nested anymore but still contain the common point. In the example of Figure~\ref{fig:tverberg_partition2} (left), there is one pair of nested simplices (red edges), and after fixing it (blue edges), we are actually done in this case; see Figure~\ref{fig:tverberg_partition2} (right).

\begin{figure}[htb]
\includegraphics[width=\textwidth]{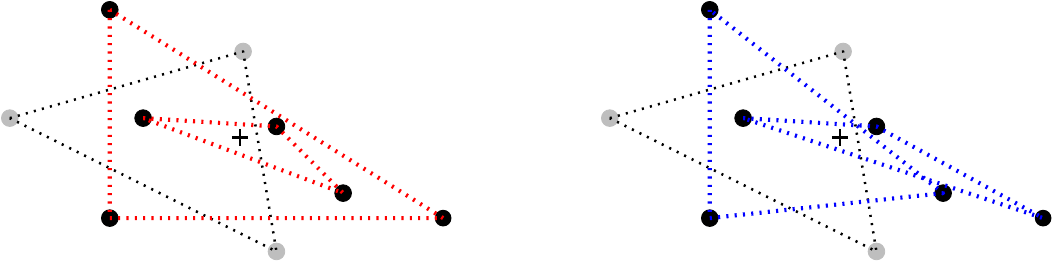}
\caption{Fixing a pair of nested simplices spanned by the 6 black points: two nested simplices (red) are transformed into two simplices that cross (blue).\label{fig:tverberg_partition2}}
\end{figure}

% There are of course some things to prove here. First of all, we need to show that we can actually perform the desired repartitioning in all cases. Then, by fixing one pair, new pairs might arise that need fixing, so it has to be proved that fixing terminates.

Repartitioning the points is always possible due to the following lemma, which may be of independent interest. This lemma is our main technical contribution, and we prove it in Section~\ref{sec:parity}.

\begin{lemma}\label{lem:unnesting}
Let $T,T'$ be two disjoint $(d+1)$-element sets in $\mathbb R^d$ such that $\nula \in \conv(T)\cap\conv(T')$. Then there exist two disjoint $(d+1)$-element sets $S,S'$ such that $S\cup S'=T\cup T'$, $\nula \in \conv(S)\cap\conv(S')$, and moreover, $\conv(S)$ and $\conv(S')$ cross.
\end{lemma}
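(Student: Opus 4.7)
The plan is to reduce to the case when the $2(d+1)$ points are in general position; this is a standard perturbation argument, since both ``origin in the hull'' and ``crossing'' are open conditions and degenerate cases can be recovered by limits. If $\conv(T)$ and $\conv(T')$ already cross we are done with $S=T,\,S'=T'$. Otherwise, both $\conv(T)$ and $\conv(T')$ are full-dimensional simplices sharing the common point $\nula$, so one must be strictly nested in the other; WLOG $\conv(T)\subsetneq\conv(T')$. In general position this forces every $t_i$ into $\mathrm{int}(\conv(T'))$ and every $t'_j$ strictly outside $\conv(T)$.

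The key construction is a single swap between $T$ and $T'$ guided by rays through the origin. For each $l\in\{0,\ldots,d\}$, the line through $t'_l$ and $\nula$ exits $\conv(T)$ on the side opposite $t'_l$ through a point lying in the relative interior of a unique facet $F_{i(l)}$ of $T$ (the one opposite $t_{i(l)}$); since $\nula$ lies on the segment from $t'_l$ to that exit point, $\nula\in\conv\bigl((T\setminus\{t_{i(l)}\})\cup\{t'_l\}\bigr)$. Symmetrically, rays from each $t_k$ through $\nula$ define $j(k)$ with $\nula\in\conv\bigl((T'\setminus\{t'_{j(k)}\})\cup\{t_k\}\bigr)$. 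Swapping $t_k$ and $t'_l$ produces $S=(T\setminus\{t_k\})\cup\{t'_l\}$ and $S'=(T'\setminus\{t'_l\})\cup\{t_k\}$, and this is a valid Tverberg partition precisely when $i(l)=k$ and $j(k)=l$, i.e.\ when $k$ is a fixed point of the coupled map $\Phi=i\circ j\colon\{0,\ldots,d\}\to\{0,\ldots,d\}$.

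Proving that $\Phi$ has a fixed point in the nested setting is the main technical obstacle. The plan is to leverage two structural facts: the vectors $\{-t'_l\}$ positively span $\R^d$ (since $\nula=\sum_l\beta_l t'_l$ with all $\beta_l>0$), and dually for $\{-t_k\}$; and the cones $R_k$ from $\nula$ over the facets of $T$ tile $\R^d$, as do the cones $R'_l$ for $T'$. From these, a cycle of length $\geq 2$ in the functional graph of $\Phi$ would constrain the $-t'_l$'s (resp.\ the $-t_k$'s) to lie in a union of cones incompatible with positively spanning $\R^d$, yielding a contradiction; an alternative approach is to take an extremal choice (e.g.\ the $t_k$ minimizing distance to $\partial\conv(T')$) and verify directly that the associated swap is valid. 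Given a good pair $(k,l)$, the resulting $\conv(S),\conv(S')$ are full-dimensional simplices both containing $\nula$, so they either cross or are nested; general position rules out nesting because in general the freshly swapped-in vertex $t'_l\in S$ does not return to $\conv(S')$ and symmetrically for $t_k\in S'$, so neither hull is contained in the other. If a first swap still leaves a nested pair, iterate; termination follows from a potential function such as the number of vertices of one simplex lying strictly inside the other's hull, which strictly decreases with each valid swap.
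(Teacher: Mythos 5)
Your reduction to general position and the observation that a non-crossing pair sharing $\nula$ must be properly nested are fine, and your equivalence ``the single swap $(t_k,t'_l)$ keeps $\nula$ in both hulls iff $i(l)=k$ and $j(k)=l$'' is correct under general position. But the heart of your argument --- that the coupled map $\Phi=i\circ j$ has a fixed point, i.e.\ that a nested pair can always be unnested by exchanging \emph{one} point of $T$ with \emph{one} point of $T'$ --- is exactly the step you do not prove. You flag it yourself as ``the main technical obstacle'' and offer only two sketches (a cone-incompatibility argument for longer cycles, or an unspecified extremal choice), neither of which is carried out or obviously completable. In the plane the single-swap statement is true (this is in essence the paper's geometric $d=2$ argument: among the six points and their antipodes there are two consecutive points of $V$, necessarily one in each part, and swapping them works), but that argument uses the circular order and does not generalize; for $d\ge 3$ it is not clear that a valid transposition always exists, and nothing in your proposal establishes it. So as written the proof has a genuine gap at its central step, and the approach may even be aiming at a stronger statement than is true.

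The paper avoids this issue entirely. It shows (via a cocycle double-counting argument, Lemma~\ref{lemparity} and Theorem~\ref{prop2}) that the number of partitions $\{F,G\}$ of $V=T\cup T'$ into two $(d+1)$-sets with $\nula\in\conv(F)\cap\conv(G)$ is \emph{even}; hence besides $\{T,T'\}$ there is at least one other such partition, which need not be a single swap away. Combined with the observation that at most one such partition can be nested --- the outer simplex of a nested pair equals $\conv(T\cup T')$, whose vertex set is unique --- any second partition automatically crosses. Note that this uniqueness observation would also repair the weakest parts of your last paragraph: you argue non-nestedness of the new pair only by ``in general the swapped-in vertex does not return,'' and you then propose to iterate swaps with a potential function, which is both unjustified and unnecessary, since once $\{T,T'\}$ is the (unique possible) nested partition, any other valid partition crosses in one step. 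Two smaller points: ``origin in the hull'' and ``crossing'' are closed, not open, conditions --- the correct perturbation statement (as in the paper) is that the point sets admitting a crossing partition form a closed set; and your claim that every $t'_j$ lies strictly outside $\conv(T)$ uses proper nesting, which is fine, but should be said.
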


To conclude the proof of Theorem~\ref{thm:main}, we still need to show that after finitely many fixing steps, there are no nested pairs anymore, in which case we have a crossing Tverberg partition.

To see this, we observe that in any fixing operation that replaces $X_i$ and $X_j$ such that $\conv(X_i)\subset \conv(X_j)$, the simplex $\conv(X_j)$ is volume-wise the unique largest $d$-dimensional simplex that can be formed from the $2(d+1)$ points $X_i\cup X_j$ involved in the operation. Hence, the two simplices $\conv(S)$ and $\conv(S')$ replacing $\conv(X_i)$ and $\conv(X_j)$ are volume-wise both strictly smaller than $\conv(X_j)$. Therefore, if we order all full-dimensional simplices by decreasing volume, the sequence of these volumes goes down lexicographically in every fixing operation.

Formally, let ${\cal V}=(V_1,V_2\ldots V_s), s\leq r$ be the sequence of volumes in decreasing order before the fix, and ${\cal V}'=(V'_1,V'_2\ldots V'_s)$ the decreasing order after the fix. Moreover, suppose that $k$ is the largest index at which the volume of $\conv(X_j)$ appears in ${\cal V}$. 
% (by even more general position, we could assume that there is a unique such index, but this does not really help here). 
The volume of $\conv(X_i)$ appears at a position $\ell>k$. As the fixing operation removes a volume equal to $V_k$ and inserts two volumes smaller than $V_k$, we have that ${\cal V}$ and ${\cal V}'$ agree in the first $k-1$ positions, but $V'_{k}<V_k$. This exactly defines the relation ``${\cal V'}<{\cal V}$'' in decreasing lexicographical order, and as this is a total order, fixing must eventually terminate.

\vskip+0.1cm
\textbf{Remark. } Instead of volume, we may use the number of points from $X$ inside $\conv(X_j)$ as a measure. 

Since applications of Lemma~\ref{lem:unnesting} allow to keep the Tverberg point and the sizes of the parts in the (Tverberg) partition, we actually have the following strengthening of Theorem~\ref{thm:main}.

\begin{theorem}\label{thm:main-stronger} 
Let $X$ be a set of points in $d$-space.  Suppose that there is a Tverberg partition of $X$ into $r$ parts of sizes $s_1\dots,s_r$, for which $o$ is a Tverberg point, and $s_i\leq d+1, i=1,\ldots r$. Then there is also a \emph{crossing} Tverberg partition of $X$ into $r$ parts of sizes $s_1\dots,s_r$, for which a Tverberg point is $o$.  
\end{theorem}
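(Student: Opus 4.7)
The plan is to observe that the iterative ``fixing'' procedure sketched in the paper for Theorem~\ref{thm:main} works verbatim when started from \emph{any} Tverberg partition with parts of size at most $d+1$, not only from one produced by Theorem~\ref{thm:Tverberg2}. The two properties of Lemma~\ref{lem:unnesting} that make this possible are: (i) it replaces two $(d+1)$-element sets with two $(d+1)$-element sets, so the multiset of part sizes is preserved by each local move; and (ii) the common point $\nula \in \conv(T)\cap\conv(T')$ is preserved, i.e., we also have $\nula \in \conv(S)\cap\conv(S')$. Hence a single application of the lemma never changes the sizes $s_1,\ldots,s_r$ and never destroys the Tverberg point.

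First, I would translate coordinates so that $o=\nula$ and denote the given partition by $X=X_1\cup\cdots\cup X_r$ with $|X_i|=s_i\le d+1$ and $\nula\in\conv(X_i)$. Parts of size strictly less than $d+1$ span convex hulls of dimension less than $d$, while the conclusion of Theorem~\ref{thm:main-stronger} only demands crossings between pairs of size exactly $d+1$; so the only ``bad'' pairs are pairs of full-dimensional simplices whose boundaries fail to meet, i.e., pairs that are nested. As long as such a bad pair $(X_i,X_j)$ exists, apply Lemma~\ref{lem:unnesting} with $T:=X_i$, $T':=X_j$ and replace $X_i,X_j$ by the resulting $S,S'$, leaving the other parts untouched. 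By (i) and (ii), the modified partition is again a Tverberg partition of $X$ into parts of sizes $s_1,\ldots,s_r$ with Tverberg point $o$, and the pair just fixed now crosses.

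The main obstacle is \emph{termination}: as the paper explicitly warns, an application of Lemma~\ref{lem:unnesting} to one pair may turn some previously crossing pairs of simplices into nested ones, so the raw count of bad pairs is not monotone. For this step I would appeal to exactly the termination argument that the authors use in the next section to prove Theorem~\ref{thm:main}: whichever potential function or well-ordering on partitions they identify to strictly decrease under each application of the lemma will work here without modification, because the starting partition is of the same type (sizes at most $d+1$, common Tverberg point) and the local move is identical. Plausible candidates are a lexicographic order on the sorted vector of simplex volumes, or a suitably weighted sum over currently nested pairs; identifying such a potential is the technical heart of the proof of Theorem~\ref{thm:main} and is inherited here for free.

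Once termination is established, the final partition has the prescribed sizes $s_1,\ldots,s_r$, the prescribed Tverberg point $o$, and no pair of $(d+1)$-element parts whose convex hulls fail to cross, which is precisely what Theorem~\ref{thm:main-stronger} asserts.
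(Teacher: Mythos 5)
Your proposal is correct and is essentially the paper's own argument: the authors justify Theorem~\ref{thm:main-stronger} precisely by observing that the fixing procedure via Lemma~\ref{lem:unnesting} preserves both the part sizes and the Tverberg point, so it can be run from any given partition with parts of size at most $d+1$, and termination follows from the same potential you guessed (lexicographic decrease of the sorted vector of simplex volumes, which works because the outer simplex of a nested pair is the unique volume-maximal simplex on the $2(d+1)$ points involved). Your deferral of the termination argument is thus harmless, since that argument is indeed independent of the starting partition.
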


In the next section, we prove Lemma~\ref{lem:unnesting}. We remark that for each of the other theorems and corollaries in this section, we have already explained how to derive them from the Tverberg theorem, or from our main Theorem~\ref{thm:main}. In Section~\ref{sec3}, we discuss possible generalizations as well as some algorithmic aspects of the problem.

\section{Proof of Lemma~\ref{lem:unnesting}}\label{sec:parity}
We employ the following parity lemma that we prove in the next two subsections (in a geometric way for $d=2$, and in a combinatorial way for all dimensions).

\begin{lemma}[Parity lemma]\label{lemparity} Let $V\subset \mathbb R^d$, $|V|=2(d+1)$, such that $V\cup\{\nula\}$ is in general position. Then
$$\Big|\Big\{\{F,G\}:F,G\in {V\choose d+1}, F\cap G=\emptyset, \nula\in \conv F\cap \conv G\Big\}\Big| \ \ \ \text{is even.}$$
\end{lemma}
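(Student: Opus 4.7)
My plan is to prove parity via a generic deformation argument: I will show that the count $E(V)$ of good pairs is invariant modulo $2$ under generic one-parameter deformations of $V$ (with $\mathbf{0}$ held fixed), and then compute it in a trivial base case. First, I will connect $V$ by a generic path $V_t$ to a base configuration $V_0$ in which all $2(d+1)$ points lie in an open halfspace bounded by a hyperplane through $\mathbf{0}$; there $E(V_0) = 0$ trivially, since no $\conv F$ contains $\mathbf{0}$. The remaining work is to analyze how $E$ changes at the codimension-one events along the path.

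The only events affecting $E$ are those at which some $d$-subset $U \subset V$ becomes cohyperplanar with $\mathbf{0}$, since toggling ``$\mathbf{0} \in \conv F$'' for some $(d+1)$-set $F$ requires $\mathbf{0}$ to cross a facet hyperplane of $F$. At such an event the toggling sets are exactly $F_v := U \cup \{v\}$ for $v \in W := V \setminus U$ (which has $d+2$ elements). Partitioning $W = A \sqcup B$ according to the side of the hyperplane spanned by $U \cup \{\mathbf{0}\}$, one checks that $F_v$ is good iff $v \in A$ before the event and iff $v \in B$ after. The complement $G_v := W \setminus \{v\}$ is disjoint from $U$, hence its status is unaffected. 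Writing $g(v) := [\mathbf{0} \in \conv G_v]$, the net change in $E$ therefore satisfies
\[
\Delta \;=\; \sum_{v \in B} g(v) \;-\; \sum_{v \in A} g(v) \;\equiv\; \sum_{v \in W} g(v) \pmod 2.
\]

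The entire argument thus reduces to the following key claim, which I will prove separately: for any $(d+2)$-point set $W$ in general position with $\mathbf{0}$, the number of $v \in W$ with $\mathbf{0} \in \conv(W \setminus \{v\})$ is even (in fact, always $0$ or $2$). My proof uses the unique Radon dependence $c = (c_w)_{w \in W}$ of $W$, whose entries are all nonzero by general position. The set of affine combinations $\mathbf{0} = \sum_w \mu_w w$ with $\sum \mu_w = 1$ is an affine line $\mu^{(0)} + tc$; intersecting with the nonnegative orthant gives either the empty set (count $0$) or a closed segment whose two endpoints each have exactly one vanishing coordinate (count $2$), and these endpoints are precisely the $v$'s with $\mathbf{0} \in \conv(W \setminus \{v\})$. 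The main obstacle is the clean bookkeeping in the event analysis, in particular verifying that only ``toggle pairs'' $\{F_v, G_v\}$ contribute to $\Delta$ and that a generic path can be chosen to avoid codimension-two strata where two events coincide; both are standard but need care.
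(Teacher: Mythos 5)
Your proof is correct in substance and reaches the statement by a genuinely different global argument than the paper, although the local input is the same. Your ``key claim'' --- for a $(d{+}2)$-point set $W$ in general position with $\nula$, the number of $v$ with $\nula\in\conv(W\setminus\{v\})$ is $0$ or $2$ --- is exactly the paper's Lemma~\ref{lem1}, which phrases it as saying that $\{F\in{V\choose d+1}:\nula\in\conv F\}$ is a \emph{cocycle}; the paper proves it by lifting the $d+2$ points to a simplex in $\R^{d+1}$ and counting facets hit by the vertical line through $\nula$, and your Radon-dependence/line-in-the-nonnegative-orthant computation is precisely the ``elementary linear algebra version'' the paper mentions but omits. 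The difference is in how this local parity is globalized: the paper proves a purely combinatorial theorem (Theorem~\ref{prop2}: for any abstract cocycle $\mathcal C\subset{V\choose k}$ with $|V|=2k$, the number of complementary pairs inside $\mathcal C$ is even) by double counting edges of a bipartite graph, whereas you deform $V$ generically to a configuration with all points in an open halfspace and show the count is a wall-crossing invariant mod $2$, each crossing reducing to the same local claim. The combinatorial route avoids all genericity-of-paths issues and, being free of geometry, transfers directly to the topological and pseudolinear settings discussed in Section~\ref{sec3}; your route is self-contained elementary geometry with a nice dynamic picture, but it is tied to the linear setting and carries the codimension-two bookkeeping you acknowledge. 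One imprecision to repair in the event analysis: when $\nula$ crosses $\mathrm{aff}(U)$ at a point \emph{outside} $\conv(U)$, your characterization ``$F_v$ is good iff $v\in A$ before and iff $v\in B$ after'' is false; what happens there is that no $F_v$ changes status at all (generically $\nula$ is not on any facet of any $\conv(F_v)$ at such an event), so $\Delta=0$ trivially. You should therefore split the events into crossings through the relative interior of $\conv(U)$, where your side-switching description and the computation $\Delta\equiv\sum_{v\in W}g(v)\pmod 2$ apply, and the remaining crossings, which contribute nothing; with that (and the standard choice of a path meeting only one codimension-one stratum at a time, so that $W\cup\{\nula\}$ is in general position at each event), the argument is complete.
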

In particular, if there is one such pair $\{F,G\}$, then there is another.

Before we move on to the proof, let us derive Lemma~\ref{lem:unnesting} from this. It suffices to handle the case where the set $T\cup T'\cup\{\nula\}$ is in general position. To justify this, we make two observations. The first is that general position is achieved by \emph{some} arbitrarily small (for example, random) perturbation of $T\cup T'$. The second is that the property of \emph{not} having sets $S,S'$ as required is maintained under \emph{any} sufficiently small perturbation. Putting the two observations together, we see that if Lemma~\ref{lem:unnesting} holds for $T\cup T'\cup\{\nula\}$ in general position, then it holds for all $T,T'$.

Then, by Lemma~\ref{lemparity} with $V=T\cup T'$, there is another pair $\{S,S'\}$ such that $S\cup S'=T\cup T'$, $\nula \in \conv(S)\cap\conv(S')$. At most one of these pairs can yield nested simplices: the outer simplex of a nested pair coincides with $\conv(T\cup T')$ and is therefore uniquely determined. Hence, one of the two pairs yields crossing simplices. 

\vskip+0.1cm
\textbf{Remark. } We note that this statement and its application is quite unusual. Typically, one proves that the number of objects of a certain type is always odd and thus at least one object of the type exists.

\subsection{Geometric proof of the parity Lemma~\ref{lemparity} for $\mathbf{d=2}$}
The purely combinatorial proof that we give below for all $d$ is not difficult, but does not provide any geometric intuition. We therefore start with a simple proof in the plane.

Consider a set $V$ of $2(d+1)=6$ points in the plane. We remark that the statement is invariant under scaling points, and thus we may assume that all points lie on a circle with the center in the origin. Due to general position, no two points from $V$ lie on a line passing through the origin; see Figure~\ref{fig:geomproof} (left).

\begin{figure}[htb]
\begin{center}
\includegraphics[width=0.6\textwidth]{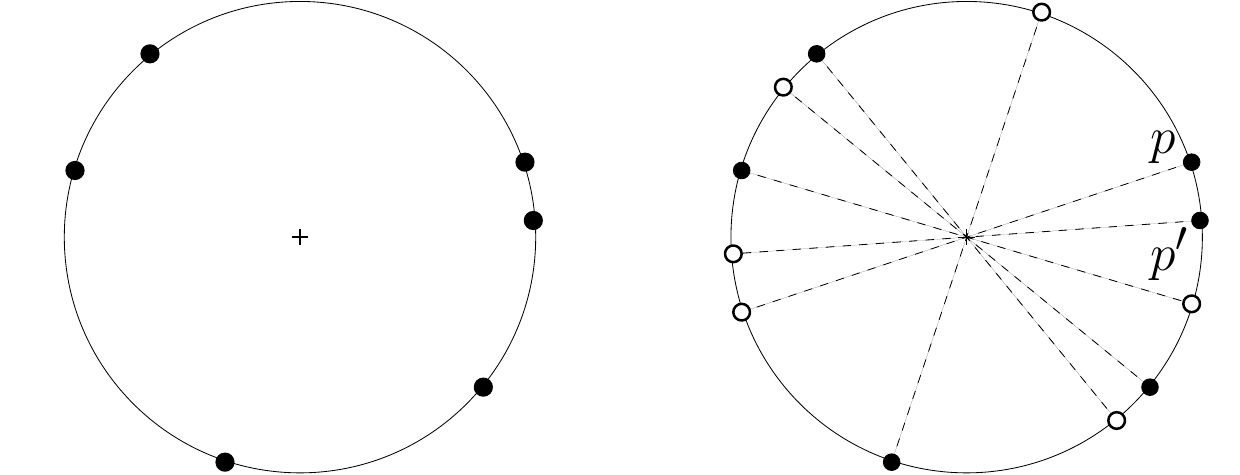}
\caption{6 points in the plane (black) and their mirror images (white); there must be two consecutive black points $p,p'$.\label{fig:geomproof}}
\end{center}
\end{figure}

Now we reflect each point at the origin and obtain another $6$ points, drawn in white in  Figure~\ref{fig:geomproof} (right). We observe that in the circular order of points, there must be two consecutive ones of the same color. Indeed, an alternating pattern (black, white, black, white,\ldots) would lead to pairs of antipodal points having the \emph{same} color. Let $p,p'$ be two consecutive points of the same color; by going to the antipodal points if necessary, we may assume that they are black and hence belong to $V$. 

We make two observations (actually, just one). (i) $p$ and $p'$ cannot belong to a triple $F\subset V$ such that $\nula\in\conv(F)$, as otherwise, the third point would get reflected to a white point between $p$ and $p'$; see Figure~\ref{fig:geomproof2} (left). (ii) For any $q\in V\setminus\{p,p'\}$, the two segments $\conv(\{p,q\})$ and $\conv(\{p',q\})$ have the origin on the same side; see Figure~\ref{fig:geomproof2} (right). 

\begin{figure}[htb]
\begin{center}
\includegraphics[width=0.6\textwidth]{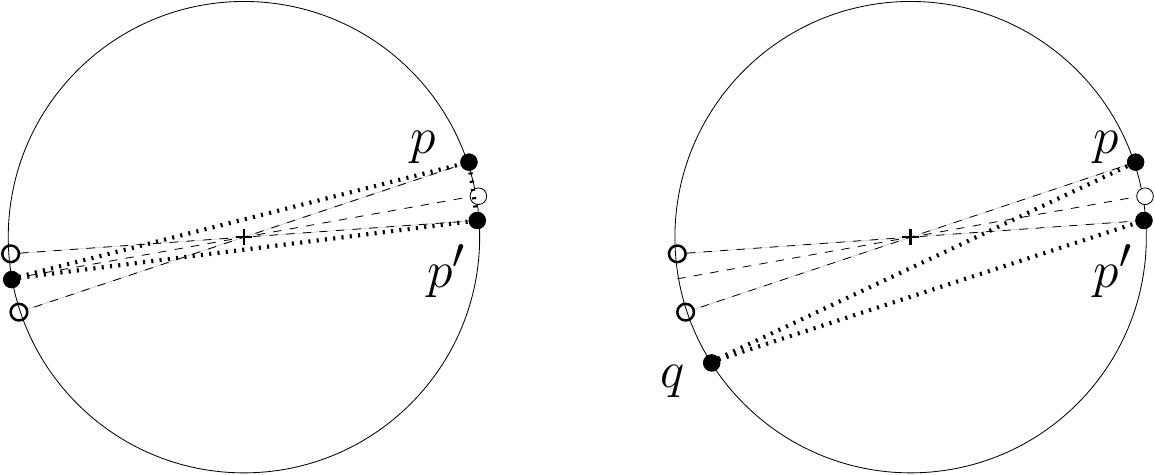}
\caption{Consecutive black points are combinatorially indistinguishable.\label{fig:geomproof2}}
\end{center}
\end{figure}

This implies the following: if $\{F,G\}$ is a partition of $V$ that we count in Lemma~\ref{lemparity}, then (i) $p$ and $p'$ are in different parts, and (ii) swapping $p$ and $p'$ between the parts leads to a different partition $\{F',G'\}$ that we also count. In other words, $p$ and $p'$ are ``combinatorially indistinguishable'' with respect to the relevant properties, and the operation of swapping them between parts establishes a matching between the partitions that we want to count. Hence, their number is even.

\subsection{Combinatorial proof of the parity Lemma~\ref{lemparity}}
Recall that for $|V|=2(d+1)$, we need to show that there is an even number of partitions $\{F,G\}$ of $V$ into parts of equal size $d+1$ such that $\nula\in\conv(F)\cap\conv(G)$.
We show that this follows from the fact that the $(d+1)$-element subsets $F$ with $\nula\in\conv(F)$ form a \emph{cocycle}, a concept borrowed from topology. %The proof itself does not use any topology, though. 

\begin{definition} Let $n\ge k\ge 1$ be integers, and let $V$ be a set with $n$ elements. A family $\mathcal C\subset {V\choose k}$ of $k$-element subsets of $V$ is a {\rm cocycle} if $$|\{F\in \mathcal C: F\subset M\}| \ \ \text{is even for every } M\subset V\ \ \text{with } \ |M|=k+1. $$\end{definition}

\textbf{Example. } Fix a set $D\in {V\choose k-1}$. Then $\delta D:=\{F\in {V\choose k}: D\subset F\}$ is a cocycle. Indeed, a $(k+1)$-element subset $M\subset V$ either contains no sets in $\delta D$ (if $D\not \subset M$) or exactly two sets in $\delta D$ (If $D=M\setminus\{p,q\}$).

\begin{lemma}\label{lem1}
  Let $V\subset \mathbb R^d$ be such that $V\cup\{\nula\}$ is in general position. Then
  $$\mathcal C(V):=\big\{F\in {V\choose d+1}: \nula \in \conv F\big\}$$
  is a cocycle.
\end{lemma}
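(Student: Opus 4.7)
The plan is to fix a $(d+2)$-subset $M=\{p_1,\dots,p_{d+2}\}\subseteq V$ and reinterpret the parity statement inside the $2$-dimensional space of linear dependencies of $M$, where it reduces to a transparent line-arrangement count. The general-position hypothesis forces any $d$ of the $p_i$ to be linearly independent (no $d$ of them lie on a hyperplane through $\nula$), so
$$
D(M):=\Bigl\{\nu\in\R^{d+2}:\sum_{i=1}^{d+2}\nu_i p_i=\nula\Bigr\}
$$
has dimension exactly $2$, and for each $i$ the subspace $\{\nu\in D(M):\nu_i=0\}$ is $1$-dimensional (it represents the unique, up to scaling, linear dependency on $M\setminus\{p_i\}$). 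The elementary translation I would use is: $\nula\in\conv(M\setminus\{p_i\})$ iff there exists a non-zero $\nu\in D(M)$ with $\nu_i=0$ and $\nu_j\ge 0$ for every $j\neq i$, equivalently the $d+1$ non-$i$ coordinates of that unique dependency all share a common sign.

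Next, I would choose an isomorphism $D(M)\cong\R^2$ and view each coordinate $\nu_i$ as a non-zero linear functional $\phi_i\colon\R^2\to\R$. The lines $\ell_i:=\ker\phi_i$ are pairwise distinct: $\ell_i=\ell_j$ would produce a non-trivial linear dependency supported on the $d$ points $M\setminus\{p_i,p_j\}$, contradicting general position. These $d+2$ distinct lines through the origin partition $\R^2$ into $2(d+2)$ open sectors, on each of which the sign vector $(\operatorname{sgn}\phi_1,\dots,\operatorname{sgn}\phi_{d+2})$ is constant, and crossing $\ell_i$ flips only the sign of $\phi_i$.

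The translation then reads: $\nula\in\conv(M\setminus\{p_i\})$ iff some ray of $\ell_i$ has all $\phi_j$ with $j\neq i$ of a common sign, and by perturbing off that ray to the side where $\phi_i>0$, this is equivalent to $\ell_i$ being one of the two bounding lines of the \emph{all-positive open cone}
$$
S_+:=\bigcap_{j=1}^{d+2}\{\phi_j>0\}\subseteq\R^2.
$$
As an intersection of distinct open half-planes through the origin, $S_+$ is either empty or an open wedge bounded by exactly two rays lying on two distinct lines $\ell_a,\ell_b$; the potentially awkward case where $S_+$ is itself a half-plane is ruled out because it would force all $\phi_j$ to be positive multiples of a single one, contradicting the distinctness of the $\ell_i$. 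Consequently, the number of indices $i$ with $\nula\in\conv(M\setminus\{p_i\})$ equals the number of lines bounding $S_+$, which is $0$ or $2$, and in particular even.

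The step I anticipate needing the most care is the equivalence ``$\nula\in\conv(M\setminus\{p_i\})$ iff $\ell_i$ bounds $S_+$'', and in particular the exclusion of the degenerate half-plane case for $S_+$; once these two observations are in hand, the parity count is forced by elementary planar geometry.
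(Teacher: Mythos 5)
Your proof is correct, but it takes a genuinely different route from the paper's. The paper argues primally: it lifts the $d+2$ points of $M$ to a full-dimensional simplex $\Delta$ in $\R^{d+1}$, notes that $\nula\in\conv(M\setminus\{p_i\})$ exactly when the vertical line through $\nula$ meets the corresponding facet of $\Delta$, and concludes that the count is $0$ or $2$ because a generic line crosses the boundary of a simplex in $0$ or $2$ facets. You instead work in the Gale-dual picture: the $2$-dimensional space $D(M)$ of linear dependencies, where each point $p_i$ becomes a nonzero functional $\phi_i$ (distinctness of the lines $\ell_i$ being exactly the general-position hypothesis that any $d$ points of $V$ are linearly independent), and the sets $M\setminus\{p_i\}$ with $\nula$ in their convex hull correspond precisely to the lines bounding the open cone $S_+=\bigcap_j\{\phi_j>0\}$, of which there are $0$ or $2$ once the half-plane degeneracy is excluded. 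This is plausibly the ``elementary linear algebra version'' that the paper mentions but omits. What each buys: the paper's lifting argument is a one-picture proof but quietly relies on the genericity of the vertical line (it must miss all ridges of $\Delta$), whereas your dual argument is self-contained, makes the exact use of general position explicit, and reduces everything to planar reasoning about a cone, at the price of the case analysis around $S_+$. One small wording fix: when you ``perturb off that ray to the side where $\phi_i>0$'', you should first replace the ray by its opposite if the common sign of the $\phi_j$, $j\neq i$, on it is negative, so that you perturb off the all-positive ray; with that trivial adjustment the equivalence with ``$\ell_i$ bounds $S_+$'' goes through exactly as you outline, and the reverse implication (a bounding ray of $S_+$ carries a dependency with $\nu_i=0$ and $\nu_j\ge 0$) is immediate from continuity.
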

\begin{proof}
Let $M:=\{v_1,\ldots,v_{d+2}\}\subset V$. Lift the points to dimension $d+1$ such that the convex hull of the lifted point set $\hat{M}$ is a full-dimensional simplex $\Delta$; see Figure~\ref{fig:lifting}.

\begin{figure}[htb]
\begin{center}
\includegraphics[width=0.6\textwidth]{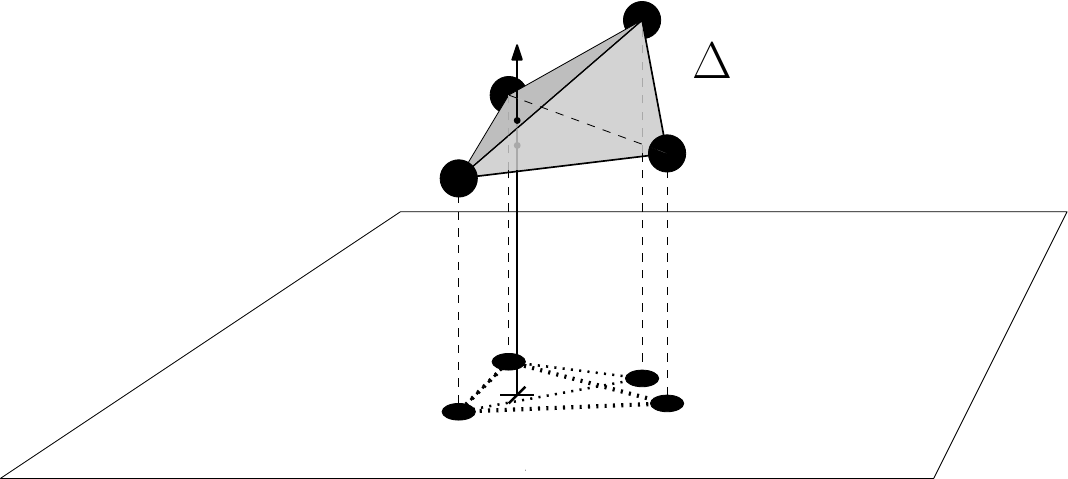}
\caption{Illustration of the geometric proof of Lemma~\ref{lem1} when $d=2$\label{fig:lifting}}
\end{center}
\end{figure}

Any set $\hat{F}\subset \hat{M}$ of size $d+1$ spans a facet of $\Delta$, and we have $\nula\in\conv(F)$ if and only if the vertical line through $\nula$ intersects that facet. As $V\cup\{\nula\}$ is in general position, this line intersects the convex set $\Delta$ in a line segment, if it intersects it at all; hence, the line intersects the boundary of $\Delta$ in zero or two points, each lying in the interior of some facet. Thus, $|\{F\in \mathcal C(V): F\subset M\}|\in\{0,2\}$.

We remark that there is also an elementary linear algebra version of this ``proof by picture'' (omitted in this extended abstract). 
\end{proof}

By Lemma~\ref{lem1}, Lemma~\ref{lemparity} is now simply a special case of the following main result of this section.

\begin{theorem}\label{prop2} Let $k\ge 1$, $|V|=2k$ and let $\mathcal C\subset {V\choose k}$ be a cocycle. Set    $$\mathcal P_{\mathcal C}:=\big\{\{F,G\}: F,G\in \mathcal C, F\cap G=\emptyset \big\}.$$
Then $|\mathcal P_{\mathcal C}|$ is even.
\end{theorem}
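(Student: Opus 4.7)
The plan is to express an arbitrary cocycle $\mathcal{C}$ as a symmetric difference of elementary cocycles $\delta D$ (one per $D\in\binom{V}{k-1}$), and then to show by induction that adjoining any single $\delta D$ preserves the parity of $|\mathcal{P}_{\mathcal{C}}|$. The first ingredient is a decomposition lemma: for $k\geq 2$, every cocycle $\mathcal{C}\subseteq\binom{V}{k}$ has the form $\mathcal{C}=\triangle_{D\in\mathcal{D}}\,\delta D$ for some $\mathcal{D}\subseteq\binom{V}{k-1}$. This is the cohomological fact that the reduced $(k-1)$-cohomology of the full simplex on $V$ vanishes; for a self-contained combinatorial proof, I would fix any $v^*\in V$, take $\mathcal{D}=\{F\setminus\{v^*\}:F\in\mathcal{C},\ v^*\in F\}$, and verify the identity by case analysis on whether $v^*\in F$, using the cocycle condition applied to $M=F\cup\{v^*\}$ in the case $v^*\notin F$. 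The base case of the induction is then immediate: for $\mathcal{C}=\delta D$ with $D\neq\emptyset$, every element of $\delta D$ contains $D$, so no two members are disjoint and $|\mathcal{P}_{\delta D}|=0$.

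The heart of the argument is a parity-preservation step: for any cocycle $\mathcal{C}$ and any $D\in\binom{V}{k-1}$, setting $\mathcal{C}'=\mathcal{C}\triangle\delta D$, one has $|\mathcal{P}_{\mathcal{C}'}|\equiv|\mathcal{P}_{\mathcal{C}}|\pmod 2$. To prove it, I would parametrize unordered pairs $\{F,V\setminus F\}$ by those $F$ containing a chosen element $v^*\in V$, so that $|\mathcal{P}_{\mathcal{C}'}|=\sum_{F:\,v^*\in F}\chi_{\mathcal{C}'}(F)\,\chi_{\mathcal{C}'}(V\setminus F)$. Expanding $\chi_{\mathcal{C}'}=\chi_{\mathcal{C}}+\chi_{\delta D}\pmod 2$ inside this product produces four summands; the two cross sums, after being combined via the involution $F\leftrightarrow V\setminus F$ (which exchanges the conditions $v^*\in F$ and $v^*\notin F$), collapse to the single unrestricted count $|\{G\in\mathcal{C}: G\cap D=\emptyset\}|=|\{G\in\mathcal{C}: G\subseteq V\setminus D\}|$. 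Since $|V\setminus D|=k+1$, the cocycle condition for $\mathcal{C}$ applied to $M=V\setminus D$ makes this count even, while the pure $\delta D$-term contributes $|\mathcal{P}_{\delta D}|=0$. Induction on $|\mathcal{D}|$, starting from $\mathcal{C}=\emptyset$ with $|\mathcal{P}_\emptyset|=0$, then delivers $|\mathcal{P}_{\mathcal{C}}|\equiv 0\pmod 2$ for every cocycle.

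I expect the main technical obstacle to lie in the bookkeeping for the parity-preservation step: one must switch carefully between ordered and unordered sums over complementary pairs $\{F,V\setminus F\}$ and check that the two cross terms pair up correctly modulo~$2$ via the $F\leftrightarrow V\setminus F$ involution. The decomposition lemma, though cohomologically standard, also deserves an explicit combinatorial derivation here to keep the exposition self-contained; once these two ingredients are in place, the induction itself is routine.
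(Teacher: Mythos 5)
Your proposal is correct and takes a genuinely different route from the paper. The paper proves the parity statement by one double count: fixing $v\in V$, it forms a bipartite graph between $\mathcal D=\{D\in{V\setminus\{v\}\choose k-1}:D\cup\{v\}\in\mathcal C\}$ and the complementary pairs $\{F,G\}$ with $G\in\mathcal C$, joins $D$ to such a pair when $D\subset F$, argues that every $D$ has even degree while exactly the pairs of $\mathcal P_{\mathcal C}$ have odd degree, and finishes with the handshake lemma. You instead first show that every cocycle is a sum of elementary ones, $\mathcal C=\triangle_{D\in\mathcal D}\,\delta D$ with $\mathcal D=\{F\setminus\{v^*\}:F\in\mathcal C,\ v^*\in F\}$ (incidentally the very same set $\mathcal D$ as in the paper), and then prove that the functional $\mathcal C\mapsto|\mathcal P_{\mathcal C}|\bmod 2$ is unchanged when a $\delta D$ is added to a cocycle: after the involution $F\leftrightarrow V\setminus F$ the cross term collapses to $|\{G\in\mathcal C: G\subset V\setminus D\}|$, which is even by the cocycle condition on $M=V\setminus D$, and the pure term is $|\mathcal P_{\delta D}|=0$. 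I checked both computations and they are sound; in fact they are the same two local cocycle evaluations the paper uses (on $F\cup\{v\}$ and on $V\setminus D$), organized algebraically rather than graph-theoretically. Your organization isolates the single cohomological input (cocycles of the simplex are coboundaries) in a form that would also be the natural starting point for the topological variants discussed later in the paper, and it sidesteps the most delicate spot of the paper's double count, namely how the edge relation is to be read for pairs both of whose members lie in $\mathcal C$ --- exactly the pairs being counted; the paper's argument, in exchange, needs no decomposition lemma and no induction.

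Two points to make explicit in a full write-up. First, every partial symmetric difference of the $\delta D$'s is again a cocycle (the defining parity condition is linear mod~$2$), so your parity-preservation step, which uses the cocycle property of the current family, indeed applies at every stage of the induction. Second, your restriction to $k\ge2$ (needed so that $D\neq\emptyset$, hence $|\mathcal P_{\delta D}|=0$) is not a deficiency but a necessity: for $k=1$ the statement as printed fails, since on $V=\{a,b\}$ the family $\mathcal C=\{\{a\},\{b\}\}=\delta\emptyset$ is a cocycle with $|\mathcal P_{\mathcal C}|=1$; the application in the paper has $k=d+1\ge2$, so nothing is lost, but you should state the restriction and note why it is harmless.
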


For the proof, we need one more concept. For a family $\mathcal D\subset {V\choose k-1}$, we define the \emph{coboundary} 
\[\delta\mathcal D=\big\{F\in {V\choose k}: F\ \text{contains an odd number of sets }D\in \mathcal D\big\}.\] This naturally extends the definition $\delta D=\{F\in {V\choose k}: D\subset F\}$ for $D\in {V\choose k-1}$ from our previous example. We call $\delta D$ the \emph{elementary cocycle} associated with $D$. 

To prove Theorem~\ref{prop2}, we use the following basic fact.

\begin{lemma}\label{lem2}
For every cocycle $\mathcal C\subset {V\choose k}$ there exists a family $\mathcal D\subset {V\choose k-1}$ such that
$\mathcal C=\delta\mathcal D$.
Equivalently,
$$\mathcal C=\delta D_1\oplus\delta D_2\oplus \ldots \oplus \delta D_m,$$
where $\mathcal D=\{D_1,\ldots, D_m\}$ and $\oplus$ denotes symmetric difference.  
\end{lemma}

\begin{proof}[Proof of Theorem~\ref{prop2}]
The statement is trivially true when $\mathcal C = \emptyset$ is the empty cocycle.

Thus, by Lemma~\ref{lem2}, it suffices to show that the parity of $|\mathcal P_{\mathcal C}|$ does not change when we take the symmetric difference with an elementary cocycle, i.e., 
\begin{equation}\label{eqparity}|\mathcal P_{\mathcal C}|\equiv |\mathcal P_{\mathcal C\oplus\delta D}|\mod 2\end{equation}
for any $D\in {V\choose k-1}$.
\begin{claim} $\mathcal P_{\mathcal C\oplus\delta D}=\mathcal P_{\mathcal C}\oplus \mathcal R,$ where
$$\mathcal R:=\big\{\{F,G\}: F\in \delta D, G\in \mathcal C, F\cap G=\emptyset \big\}.$$
\end{claim}
\begin{proof}
To see this, note that for pairs $\{F,G\}$ with $D\not \subset F$ and $D\not\subset G$ nothing changes, i.e., such a pair is either in both $\mathcal P_{\mathcal C}$ and $\mathcal P_{\mathcal C\oplus\delta D}$ or in neither. Thus, it suffices to consider pairs $\{F,G\}$, $F\cap G=\emptyset$, such that $D$ is contained in one of the two sets. Without loss of generality, $D\subset F$ and $G\subset V\setminus D$.

Since $D\not\subset G$, we have $G\in \mathcal C$ if and only if $G\in \mathcal C\oplus\delta D$.  Since $D\subset F$, we have $F\in \delta D$, and hence, $F\in \mathcal C$ if and only if $F\not\in \mathcal C\oplus \delta D$. Thus, $\{F,G\} \in \mathcal P_{\mathcal C\oplus\delta D}$ if and only if  $\{F,G\}\in \mathcal{R}$ and  $\{F,G\}\not\in \mathcal{P}_{\mathcal{C}}$. This proves the claim.
%\vskip+0.1cm
\end{proof}

Since $\mathcal C$ is a cocycle and $|V\setminus D|=k+1$, there is an even number of $G\in \mathcal C$ with $G\subset V\setminus D$. Since $|V|=2k$, each such $G$ determines a unique $F=V\setminus G$, $F\in \delta D$, with $\{F,G\}\in \mathcal R$. Thus,
$$|\mathcal R|=\big|\{G\in \mathcal C: G\subset V\setminus D\}\big|\equiv 0\mod 2.$$
Combined with the claim, we get that \eqref{eqparity} holds, which concludes the proof of the theorem.
\end{proof}

For the sake completeness, we also prove Lemma~\ref{lem2}.

\begin{proof}[Proof of Lemma~\ref{lem2}] Let $\mathcal C\subset {V\choose k}$ be a cocycle. Choose an arbitrary element $v\in V$. Define 
$$\mathcal D:=\mathcal D_v:=\big\{D\in{V\setminus v\choose k-1}: D\cup \{v\}\in \mathcal C\big\}.$$
 
We claim that $$\mathcal C=\delta \mathcal D.$$%:=\big\{F\in {V\choose k-1}: F\text{ contains an odd number of sets }D\in\mathcal D\big\}.$$
Let $F\in {V\choose k}$. We distinguish two cases:
\begin{itemize}
  \item[1) ] If $v\in F$ then, by the definition of $\mathcal D\subset {V\setminus \{v\}\choose k-1}$, we have $F\setminus \{v\}\in \mathcal D$ if and only if $F\in \mathcal C$. Moreover, $F\setminus \{v\}$ is the only set in ${V\setminus \{v\}\choose k-1}$ that $F$ contains. Thus,
$$F\in \mathcal C\ \Leftrightarrow\ F\in \delta \mathcal D\ \ \ \text{in this case.}$$
  \item[2) ] Assume $v\notin F$. Then $M:=F\cup \{v\}$ has $k+1$ elements. Since $\mathcal C$ is a cocycle, $M$ contains an even number of sets from $\mathcal C$. Thus,
\begin{align*}
  F\in \mathcal C & \Leftrightarrow M \ \text{ contains an odd number of }G\in \mathcal C \ \text{with } v\in G \\
   & \Leftrightarrow F\ \text{ contains an odd number of sets }D=G\setminus \{v\}\in \mathcal D\\
   & \Leftrightarrow F\in \delta \mathcal D.
\end{align*}
\end{itemize}
\end{proof}
%%% FIX ENDS HERE %%% 
%%% ------------- %%%

\section{Discussion}\label{sec3}
\subsection{A Topological version of Theorem~\ref{thm:main}}

%For appropriately chosen parameters, Tverberg's theorem generalizes to the topological setting, also known as Topological Tverberg's theorem.
%Namely, if $r$ is a power of a prime any continuous map from an $N$-dimensional simplex $\Delta_N$ to  $\mathbb{R}^d$, for $N\ge (d + 1)(r-1)$, maps $r$ pairwise disjoint faces of $\Delta$ so that they intersect in a common point.
If $r$ is a prime power, then the Tverberg theorem admits a topological generalization, known as the \emph{Topological Tverberg theorem}:
\begin{theorem} 
\label{thm:topTverberg}
%For every $d\in \mathbb{N}$, prime power $r$ and continuous map $\mu:\Delta_{(d+1)(r-1)}\to \R^d$, there exist $r$ pairwise disjoint faces $Y_1,\ldots, Y_r\subseteq \Delta$ such that $f(Y_1)\cap \ldots \cap  f(Y_r) \ne \emptyset$.
If $d\in \mathbb{N}$ and if $r$ is a prime power, then for every continuous map from the $(d+1)(r-1)$-dimensional simplex $\Delta_{(d+1)(r-1)}$ to $\R^d$, there exist $r$ pairwise disjoint faces of $\Delta_{(d+1)(r-1)}$ whose images intersect in a common point. 
\end{theorem}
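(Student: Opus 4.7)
The plan is to follow the classical configuration-space / test-map paradigm of Bárány--Shlosman--Szűcs, extended to prime powers using Volovikov's lemma (or Özaydin's equivariant obstruction argument). Throughout, write $N=(d+1)(r-1)$, and let $G$ be the cyclic group $\mathbb{Z}/r$ when $r$ is prime, respectively an elementary abelian $p$-group $(\mathbb{Z}/p)^k$ when $r=p^k$ is a prime power. The goal is to reduce the geometric statement to the nonexistence of a $G$-equivariant map between two explicit $G$-spaces, and then to rule out such a map by cohomological means.

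First, I would encode ``$r$ pairwise disjoint faces with intersecting images'' as a coincidence problem. Consider the $r$-fold $2$-wise deleted join $\Delta_N^{*r}_\Delta$, whose points are formal convex combinations $\lambda_1 x_1 + \dots + \lambda_r x_r$ with $x_i$ lying in pairwise disjoint faces of $\Delta_N$ and $\sum \lambda_i=1$, $\lambda_i\ge 0$. This is a simplicial complex of dimension $N(r-1) + (r-1)$ admitting a free $G$-action by permuting (cyclically, or via the regular representation of $(\mathbb{Z}/p)^k$) the $r$ coordinates. A continuous map $f:\Delta_N\to\R^d$ extends by joins to a $G$-equivariant map $F:\Delta_N^{*r}_\Delta\to (\R^d)^{*r}$. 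The image of $F$ lies in the thin diagonal $\{(y,\dots,y)\}\subset (\R^d)^{*r}$ precisely when some Tverberg $r$-partition exists for $f$. Assuming for contradiction that no such partition exists, $F$ avoids the thin diagonal, so composing with the orthogonal projection off the diagonal yields a $G$-equivariant map
\[
\widetilde F:\Delta_N^{*r}_\Delta \longrightarrow S(W^{\oplus d}),
\]
where $W$ is the standard $(r-1)$-dimensional real $G$-representation (the orthogonal complement of the diagonal in $\R^r$) and $S(W^{\oplus d})$ is its unit sphere.

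The main step is to show that such an equivariant map cannot exist. The source is known to be $(r-1)(d+1)-1$-connected and carries a free $G$-action; the target sphere $S(W^{\oplus d})$ has dimension $d(r-1)-1$. When $r$ is prime, one invokes the equivariant Dold theorem (or a direct $\mathbb{F}_r$-cohomological computation using the $G$-equivariant Euler class of $W^{\oplus d}$, as in Bárány--Shlosman--Szűcs): the top equivariant cohomology class of the source does not vanish, but pulls back trivially from the sphere, a contradiction. For $r=p^k$ a prime power, the cyclic group argument fails, so I would instead use Volovikov's lemma: if $X$ is a connected, $(n-1)$-connected, finite-dimensional $G$-space with $G=(\mathbb{Z}/p)^k$ acting freely, and $Y$ is a finite-dimensional $G$-space with $\widetilde H^i(Y;\mathbb{F}_p)=0$ for $i<n$ and $H^n(Y;\mathbb{F}_p)\ne 0$ is not killed by the relevant Euler class, then no $G$-map $X\to Y$ exists. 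The connectivity and dimension bounds match precisely when $N=(d+1)(r-1)$, which is why the Tverberg number is tight.

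The main obstacle is the prime-power case. For prime $r$, the $\mathbb{Z}/r$-argument is essentially Borsuk--Ulam and is comparatively routine. For prime powers, one cannot hope to use a free cyclic action with a single generator; the correct group is $(\mathbb{Z}/p)^k$ and the action on the $r$ coordinates is by the regular representation. The subtlety is to set up a $G$-action on the configuration space that is free outside a low-dimensional subcomplex and to verify the hypotheses of Volovikov's lemma with the right equivariant connectivity and non-vanishing of the relevant characteristic class. Once this is in place, the geometric conclusion of Theorem~\ref{thm:topTverberg} follows formally from the nonexistence of the equivariant map $\widetilde F$.
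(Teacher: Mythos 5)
The paper does not prove Theorem~\ref{thm:topTverberg} at all: it quotes it as a known result, citing B{\'a}r{\'a}ny--Shlosman--Sz{\"u}cs for prime $r$ and \"{O}zaydin for prime powers. Your outline is the standard configuration-space/test-map scheme underlying those cited proofs, so in spirit you are re-deriving the black box the paper relies on; however, the setup as written has concrete errors. First, the $2$-wise deleted join $(\Delta_N)^{*r}_{\Delta(2)}$ is isomorphic to the $(N+1)$-fold join of $r$ discrete points, hence has dimension $N=(d+1)(r-1)$, not $N(r-1)+(r-1)$; its $(N-1)$-connectivity, which you do state correctly, is exactly what makes the bound tight. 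Second, you conflate the two standard schemes. On the deleted join the test map must record the join coefficients $\lambda_i$ as well as the points $f(x_i)$; a Tverberg coincidence means all $\lambda_i=1/r$ \emph{and} all $f(x_i)$ equal, and the complement of that diagonal retracts $G$-equivariantly onto $S(W^{\oplus(d+1)})$ of dimension $(d+1)(r-1)-1=N-1$, not onto $S(W^{\oplus d})$. The sphere $S(W^{\oplus d})$ of dimension $d(r-1)-1$ is the target of the deleted-\emph{product} scheme, in which case the source must be the deleted product of $\Delta_N$, whose $(d(r-1)-1)$-connectivity is a separate, nontrivial input. As written, ``orthogonal projection off the thin diagonal of $(\R^d)^{*r}$'' does not yield a map into $S(W^{\oplus d})$, so the dimension/connectivity bookkeeping in your main step does not refer to a well-defined test map.

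Third, and most importantly, the prime-power case is the entire content of the theorem, and your proposal handles it only by name. You correctly note that for $r=p^k$, $k\ge 2$, one must pass to $G=(\mathbb{Z}/p)^k$ acting via the regular representation, but the real obstacle is that the $G$-action on the target sphere is then not free (nonzero vectors fixed by proper nontrivial subgroups exist), so Dold's theorem is unavailable; this is precisely where \"{O}zaydin's equivariant obstruction/ideal-valued index argument or Volovikov's Borsuk--Ulam-type theorem enters. Your statement of ``Volovikov's lemma'' is garbled (there is no hypothesis about a class ``not killed by the relevant Euler class''; what is needed is a fixed-point-free action on the target together with a comparison of the source's mod-$p$ acyclicity range against the target's cohomological dimension), and its hypotheses are never verified in your setting. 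Invoking these results as citations would be consistent with what the paper itself does, but then the proposal should say so explicitly rather than present the verification as routine; as it stands, the decisive step is asserted, not proved.
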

This result was first proved in the case when $r$ is a prime by {B{\'a}r{\'a}ny, Shlosman and Sz{\"u}cs~\cite{BSS1981_topTverbergPrime} and later extended to prime powers by \"{O}zaydin~\cite{O1987_topTverberg}. On the other hand, Theorem~\ref{thm:topTverberg} is false if $r$ is not a prime power: By work of Mabillard and Wagner \cite{MabillardWagner2015} and a result of \"{O}zaydin~\cite{O1987_topTverberg}, a closely related result (the \emph{generalized Van Kampen--Flores theorem}) is false whenever $r$ is not a prime power and, as observed by Frick~\cite{F15_approx}, the failure of  Theorem~\ref{thm:topTverberg} for $r$ not a prime power follows from this by a reduction due to Gromov \cite{Gromov} and to Blagojevi\'{c}, Frick, and Ziegler \cite{BFZ_Constraints_2014}. The lowest dimension in which counterexamples are known to exist is $d=2r$~\cite{AMSW15_codim2,MabillardWagner2015}.
%We note that Theorem~\ref{thm:topTverberg} does not extend to all natural numbers $r$ due to a work of Wagner and Mabillard~\cite{MW14_elim} and \"{O}zaydin~\cite{O1987_topTverberg} as observed by Frick~\cite{F15_approx}.
%Nevertheless, such an extension could still be  possible for $d\le 12$, since the counterexamples are known to exist
%only in dimension 13 and higher~\cite{AMSW15_codim2},
We refer to the recent surveys~\cite{Barany2018,BZ2017_story,S18_user,Z18} for more background on the Topological Tverberg theorem and its history.

In the same vein, it is natural to wonder if our Theorem~\ref{thm:main} also extends to the topological setting. The straightforward approach to generalize our result would be to use the Topological Tverberg theorem instead of the Tverberg theorem and then keep fixing the pairs of simplices whose boundaries do not mutually intersect. To this end we need an adaptation of  Lemma~\ref{lem1}  and the fixing procedure
to the topological setting. The rest of our argument is free of any geometry except for the use of Carath\'{e}odory's theorem which is not really crucial. While extending  Lemma~\ref{lem1} is easy, showing the termination of the fixing procedure appears to be quite difficult except under the scenario which we discuss below.
First, we discuss planar extensions of Theorem~\ref{thm:main}.
An \emph{arrangement of pseudolines} $\mathbb{P}$ is a finite set of not self-intersecting open arcs, called \emph{pseudolines}, in $\mathbb{R}^2$ such that (i) For every pair $P_1,P_2\in \mathbb{P}$ of two distinct pseudolines, $P_1$ and $P_2$ intersect transversely in a single point, and (ii) $\mathbb{R}^2\setminus P$ is not connected for every $P\in \mathbb{P}$.
A drawing of a complete graph on $n$ vertices $K_n$ in the plane is \emph{pseudolinear} if the edges can be extended to an {arrangement of pseudolines}.

In the plane, it is not hard to see that  Theorem~\ref{thm:main} and its proof almost extends to the setting of  pseudolinear drawings of complete graphs. Since the Topological Tverberg theorem is only valid for prime powers $r$, the number of pairwise crossing triangles is slightly smaller than the number of vertices divided by 3.

\begin{theorem} \label{thm:mainTopPlane}
In a pseudolinear drawing of a complete graph  $K_{3n}$  we can find $m=(1-o(1))n$ vertex-disjoint and pairwise crossing triangles. Moreover, the topological discs bounded by these triangles intersect in a common point.
\end{theorem}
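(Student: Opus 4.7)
The plan is to mirror the proof of Theorem~\ref{thm:main} in the pseudolinear setting, substituting the Topological Tverberg Theorem (Theorem~\ref{thm:topTverberg}) for Tverberg's theorem. Let $r$ be the largest prime power with $r\le m$; by known bounds on prime gaps, $r=(1-o(1))m$. Pick any $3r$ vertices of $K_{3m}$; the induced pseudolinear drawing of $K_{3r}$ extends to a continuous map $f\colon\Delta_{3r-1}\to\R^2$ by sending each $2$-face to the topological disc bounded by the corresponding pseudo-triangle and extending arbitrarily on higher-dimensional faces (possible since $\R^2$ is contractible). Applying Theorem~\ref{thm:topTverberg} to $f$ yields $r$ pairwise disjoint faces of $\Delta_{3r-1}$ whose $f$-images all contain a common point $o$. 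A topological Carath\'eodory argument in the plane (if the $f$-image of a face contains $o$, then the $f$-image of some $2$-subface also contains $o$), combined with padding smaller faces by unused vertices, reduces this to $r$ vertex-disjoint pseudo-triangles $T_1,\dots,T_r$ whose bounded discs share $o$; we also perturb $o$ into general position with respect to the arrangement.

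To run the fixing procedure we need the pseudolinear analog of Lemma~\ref{lem:unnesting}, which by the reduction used in Section~2 follows from establishing the cocycle property of Lemma~\ref{lem1} in the pseudolinear setting: for any $4$ vertices $V$ and generic $o$, an even number of the four pseudo-triangles on triples of $V$ contain $o$. Gluing the four pseudo-triangles along their shared pseudo-edges gives a continuous map $g\colon\partial\Delta_3\simeq S^2\to\R^2$. A standard degree argument (extend $g$ to $S^2\to S^2$ via one-point compactification; since the image avoids $\infty$ the degree is zero) shows that $|g^{-1}(o)|$ is even for generic $o$; and $|g^{-1}(o)|$ counts precisely the triples whose pseudo-triangle contains $o$. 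The purely combinatorial Theorem~\ref{prop2} then applies verbatim and yields the unnesting lemma.

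The main obstacle, and the reason the statement is restricted to the plane, is termination of the iterated fixing. We use the measure $\mu(T)$ equal to the number of vertices of $K_{3m}$ lying in the closed topological disc bounded by $T$. The key planar fact is that whenever $T\subset T'$ as topological discs, every pseudo-triangle $S$ on three of the six vertices in $T\cup T'$ is contained in $T'$: extending any of its pseudo-edges to a pseudoline, that pseudoline crosses $\partial T'$ in either zero or two points, and since both endpoints of the pseudo-edge already lie inside the closed disc of $T'$, so does the entire pseudo-edge; the closed disc of $S$ therefore lies in that of $T'$, giving $\mu(S)\le\mu(T')$. For $S\ne T'$, any $v\in T'\setminus S$ lies on $\partial T'$ but, by general position, not on $\partial S$, and $v$ cannot lie in the open disc of $S$ (which is contained in the open disc of $T'$ while $v\in\partial T'$), so $v$ lies outside the closed disc of $S$ and $\mu(S)\le\mu(T')-1$. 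Ordering the pseudo-triangles of the current partition by $\mu$ in decreasing order, each fix replaces $T'$ by two pseudo-triangles of strictly smaller $\mu$, so the sorted vector strictly decreases lexicographically. Since these values are bounded nonnegative integers, fixing terminates, yielding the desired $r=(1-o(1))m$ vertex-disjoint, pairwise crossing pseudo-triangles with common point $o$.
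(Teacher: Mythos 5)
Your proposal is correct and follows the same overall architecture as the paper's proof: choose the largest prime power near $m$, apply the Topological Tverberg Theorem to a map extending the pseudolinear drawing, reduce the Tverberg faces to triangles via the Carath\'eodory-type lemma for drawings (the paper cites \cite[Lemma 4.7]{BFK15_monotone} where you assert the reduction), and then iterate the fixing operation, whose validity rests on the pseudolinear analogue of Lemma~\ref{lem1} together with the combinatorial Theorem~\ref{prop2}. Where you genuinely diverge is in how the two supporting steps are justified. For the cocycle property you give an explicit degree argument (glue the four triangle discs into a map $S^2\to\R^2$, note it misses $\infty$, hence has degree $0$, so a generic point has an even number of preimages), whereas the paper merely asserts that Lemma~\ref{lem1} ``easily extends''; your argument is a clean way to make that assertion precise. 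For termination, the paper derives the key containment fact (all edges on the six vertices stay inside the outer disc) from the classification of $K_4$ subdrawings in pseudolinear drawings and then compares areas, while you prove the same containment directly by counting intersections of the extending pseudoline with $\partial T'$ and measure progress by the number of vertices in the closed disc (the paper's remark after its geometric termination proof anticipates exactly this point-counting variant). Both routes work; yours avoids invoking the $K_4$ classification, at the cost of a small imprecision: the dichotomy ``the pseudoline crosses $\partial T'$ in zero or two points'' is not literally true when the pseudo-edge shares a vertex $w$ with $T'$ (the pseudoline meets $\partial T'$ at the corner $w$, and possibly at one more point of the opposite edge, so it may meet it in exactly one point), but since the pseudoline meets the two pseudolines through $w$ only at $w$ and the third one only once, the same crossing-count reasoning still forces the pseudo-edge to stay in the closed disc, so the conclusion $\mu(S)\le\mu(T')-1$ stands. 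Like the paper, you gloss over the genericity of the common point $o$ with respect to the drawn edges (needed for the parity lemma) and over the fact that at most one partition of the six vertices can be nested, but these are handled at the same level of detail as in the paper's own (extended-abstract) proof.
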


\begin{proof}
Let $\mathcal{D}$ be a pseudolinear drawing of  $K_{3n}$.
We put $m$ to be the largest prime power not larger than  $n$.
By the asymptotic law of distribution of prime numbers  $m=(1-o(1))n$.
Next, apply the Topological Tverberg theorem with $r=m$ and $d=2$ to a map $\mu:\Delta_{3m-3}\rightarrow \mathbb{R}^2$
which extends $\mathcal{D}$  as follows.
We define $\mu$ on the 1-dimensional skeleton of $\Delta_{3m-3}$ as  a restriction of $\mathcal{D}$  to some $K_{3m-2}$.
Note that every triangle of $K_{3m-2}$ is drawn by $\mathcal{D}$ as a closed arc without self intersections. The map $\mu$ extends to the 2-dimensional skeleton of $\Delta_{3m-3}$ so that every  2-dimensional face is mapped homeomorphically in $\mathbb{R}^2$. We  define the map $\mu$ on the rest of $\Delta_{3m-3}$ arbitrarily while maintaining continuity.

Analogously to the proof of Theorem~\ref{thm:main}, an application of the Topological Tverberg theorem gives us $m-1$ disjoint $2$-dimensional faces $F_1,\ldots, F_{m-1}$ of $\Delta_{3m-3}$ whose images under $\mu$ intersect in a common point.
 To this end we apply Carath\'eodory's theorem for drawings of complete graphs~\cite[Lemma 4.7]{BFK15_monotone} instead of the original version of Carath\'eodory's theorem.
Let $T_i$ denote the boundary of $F_i$, for $i=1,\ldots, m-1$.
Note that each $T_i$ is a triangle in $K_{3m-2}$.
If all pairs $T_i$ and $T_j$ are crossing then we are done. Otherwise, we perform the fixing operations, which can be done since  Lemma~\ref{lem1} easily extends to the setting in which  we replace simplices  by images of 2-dimensional faces of $\Delta_{3m-3}$  under $\mu$.

The procedure of applying successively the fixing operation terminates due to the following argument.
First, observe that 
every four vertices in a pseudolinear drawing of a complete graph induce
either a crossing free drawing of $K_4$ or a drawing of $K_4$ with exactly one pair of crossing edges in the interior of a disc bounded by a crossing free 4-cycle, see Figure~\ref{fig:pseudo} for an illustration.
It follows that if $\mu(T_i) \cap \mu(T_j)=\emptyset$, and let's say $\mu(F_i)\subset \mu(F_j)$, then the restriction of $\mu$ to the subgraph of $K_{3m-2}$  induced by $V(T_i)\cup V(T_j)$ is contained in $\mu(F_j)$.
Indeed, otherwise there exists a vertex $u\in V(T_i)$ and $v\in V(T_j)$ such that $\mu(uv)$ crosses an edge of $T_j$, let us denote it by $wz$, that is not incident to $v$. Then the restriction of $\mu$ to the subgraph of $K_{3m-2}$ induced by $\{u,v,w,z\}$ is a drawing of $K_4$ that is not pseudolinear  (contradiction). Therefore the volume argument goes through if we consider, say, volumes of $\mu(F_i)$'s.
\end{proof}

\begin{figure}
\centering
\includegraphics[scale=1]{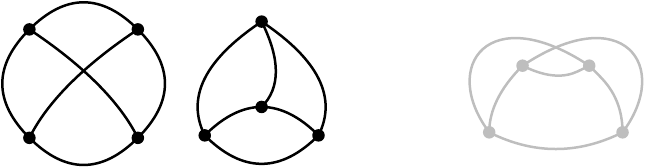}
\caption{The allowed drawings of $K_4$ in a pseudolinear drawing of $K_n$ (left). The forbidden drawing of $K_4$ in a pseudolinear drawing of $K_n$ (right).\label{fig:pseudo}}
\end{figure}

A drawing of a  graph in the plane is \emph{simple} if every pair of edges intersect at most once either at a common end point or in a proper crossing.
Clearly, all pseudolinear drawings of complete graphs are also simple, but not vice-versa, as illustrated in the last drawing in Figure~\ref{fig:pseudo}. Hence, it might be worthwhile to extend   Theorem~\ref{thm:mainTopPlane} to simple drawings of complete graphs.
 
If we want the interiors of the triangles to be pairwise intersecting, we only known that we can take $m$ at least $\Omega(\log^{1/6}n)$ which is easily derived from the following result of Pach, Solymosi and T\'oth~\cite{PST2003_unavoidable}. Every simple drawing of $K_n$ contains a drawing of $K_m$ that is weakly isomorphic to  a so-called convex  complete graph or a twisted complete graph, see Figure~\ref{fig:twisted}, for which Theorem~\ref{thm:mainTopPlane} holds. We omit the proof of the latter which is rather straightforward. For example, if in a twisted drawing of $K_{3n}$ the vertices are labeled as indicated in the figure, $\{\{0+i,n+i,2n+i\}| \ i=0,\ldots, n-1\}$ is a crossing Tverberg partition.

 If we do not insist on the interiors of the triangles to be pairwise intersecting, we know that $m$ can be taken to be at least $\Omega(n^{\varepsilon})$ for some small $\varepsilon>0$ by the following result of Fox and Pach~\cite{FP2012_coloring}.
Every simple drawing of $K_n$ contains $\Omega(n^{\varepsilon})$ pairwise crossing edges for some $\varepsilon>0$.

\begin{figure}

\centering
\includegraphics[scale=0.7]{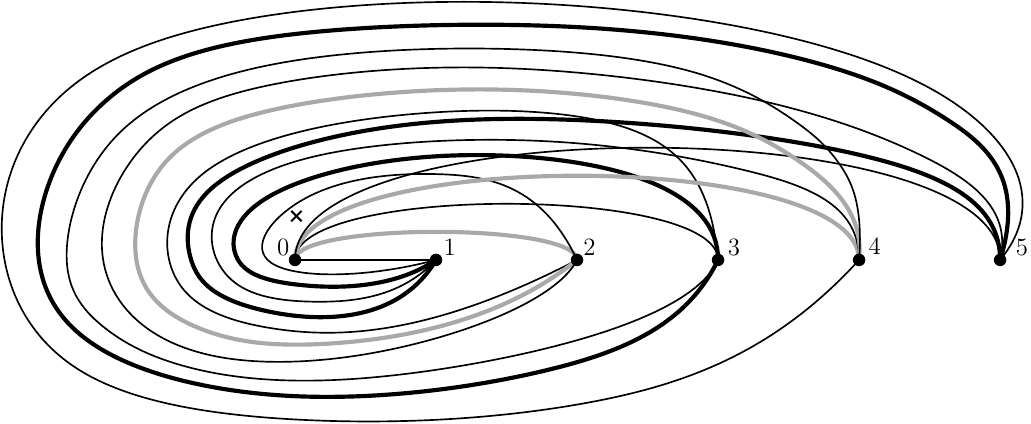}
\caption{The twisted drawing of $K_6$. The bold grey and black edges form boundaries of two crossing triangles. The symbol x marks a  point contained in the interior of these two triangles.\label{fig:twisted}}
\end{figure}

Theorem~\ref{thm:mainTopPlane} could be generalized to hold for an appropriate high dimensional analog of pseudolinear drawings. Since this would require  introducing  many technical terms and would not offer substantially interesting content we refrain from doing so.

\subsection{Stronger conditions on crossings}\label{sec:stronger}
%One may ask if in Theorem~\ref{thm:main}, in addition to the condition that the boundaries of all $(d+1)$-sets of the partition intersect, it is possible to get a stronger property that $d$-boundaries of any two sets in the partition intersect. This condition automatically holds for  pairs of parts of size at most $d$, since they are equal to their $d$-boundary. However, it is not automatic for pairs $X_i,X_j$, where $X_i$ is of size $d+1$ and $X_i$ is of size $\le d$, since the latter does not necessarily intersect the $d$-boundary of the former. 
%
%It turns out that, in general, we cannot guarantee this property. Indeed, consider a set of $4$ points on the plane, which consists of a triangle and a point in its interior. Then the only Tverberg partition of this set has two parts: the outer triangle and the point in the interior. But the latter obviously does not intersect the boundary of the former.

A pair of vertex disjoint $(\lceil d/2 \rceil-1)$-dimensional simplices in general position in $d$-space does not intersect. 
Hence, the pairwise intersection of the boundaries of $\conv(X_i)$'s in the conclusion of   Theorem~\ref{thm:main}, cannot be strengthened to the pairwise intersection of lower than
$\lceil d/2 \rceil$-dimensional skeleta of the boundaries. 

Nevertheless, for $d=3$ one may ask if in the setting of Lemma~\ref{lem:unnesting}, we get a stronger property along the following lines. 
Can we guarantee the existence of a pair of vertex-disjoint tetrahedra $\{S,S'\}$ that both contain the origin and such that the boundary of a 2-dimensional face $F$  of $S$ is linked with the boundary of a 2-dimensional face $G$ of $S'$, meaning that the boundary of $F$ intersects $G$ and the boundary of $G$ intersects $F$? 
Again, the answer to this question is negative. Stefan Felsner and Manfred Scheucher (personal communication) found the following set of 8 points:
      $$(3, -2, 2),\ \ \ \
      (2, -5, 3),\ \ \ \
      (-3, 0, -4),\ \ \ \
      (-1, 2, 0),$$
      $$(1, -5, -4),\ \ \ \
      (4, 1, -2),\ \ \ \
      (-2, -5, -4),\ \ \ \
      (-3, 1, 3).$$
     This set determines a pair of disjoint tetrahedra both containing the origin $(0,0,0)$, but no two disjoint linked tetrahedra
     both containing the origin.

% \rado{we could perhaps draw this counterexample somehow 
% (Can anybody do this?) and state  which is the pair of disjoint tetrahedra both containing the origin $(0,0,0)$.  Do you see this fast? I am not sure if I can make anything of Stephan's suggestion (see his email from August 15) about viewing the point set in sage since I am not a sage user and his explanation of how the counterexample was found}
The example was found using a SAT solver who found an abstract order type with the required property. A realization of the order type with actual points was obtained with a randomized procedure.

\subsection{Computational complexity of finding a crossing Tverberg partition}
A natural question is whether we can find the partition of the point set given by Theorem~\ref{thm:main} efficiently, i.e.,  in polynomial time in the size of $X$. 
A straightforward way to construct an algorithm is to  make the proof of Theorem~\ref{thm:main} algorithmic.
To this end we first need an algorithm for finding a Tverberg partition and also a Tverberg point. Unfortunately, it is neither known whether this can be done in polynomial time, nor are there hardness results~\cite[Section 3.4]{TverbergComplexity18}. Since a Tverberg partition always exists, the decision problem is trivial, so NP-completeness cannot apply. It might still be possible to prove completeness of the problem for a suitable subclass of TFNP (containing search problems for which a solution is guaranteed to exist). It has been shown that the problem is contained in two such subclasses that have complete problems, namely PPAD and PLS~\cite[Theorem 4.9]{MMSS17_endofline}. While this makes it unlikely that the problem is complete for either of them, it could well be contained in and turn out to be complete for a subclass of PPAD$\cap$PLS, a number of which have been introduced and studied in recent years; see~\cite{DBLP:journals/corr/abs-1811-03841} and the references therein.

% Unfortunately, several results suggest that 
% an efficient algorithm for this problem is rather unlikely to exist. Since a Tverberg partition is guaranteed to exist, NP complexity theory does not apply to the algorithmic problem of finding it. Instead of NP, the PPAD complexity theory~\cite{P94_complexity} applies as the problem is known to be in PPAD~\cite[Theorem 4.9]{MMSS17_endofline}.

The only closely related hardness result we are aware of is the one by Teng~\cite[Theorem 8.14]{T92_points}, who proved  that checking whether a given point is a Tverberg point of a given point set is NP-complete.

A line of research on finding an approximate Tverberg partition efficiently was initiated by Miller and Sheehy~\cite{MS10_approx} and further developed in~\cite{MW13_approx,RS16_approx}.
In particular, Mulzer and Werner~\cite{MW13_approx} showed that it is possible to find  in time $d^{O(\log d)|X|}$ an approximate Tverberg partition of size $\left\lceil \frac{|X|}{4(d+1)^3}\right\rceil$, whereas Theorem~\ref{thm:Tverberg} guarantees the partition of size $\left\lceil \frac{|X|}{d+1}\right\rceil$.

If we aim only at an approximate algorithmic version of our Theorem~\ref{thm:main} along the lines of the result of Mulzer and Werner, we face the problem of efficiently fixing an (approximate) Tverberg partition to make it crossing. Due to the fact that our termination argument for the iterated \emph{Fixing Pairs} procedure relies on progress in the lexicographical ordering of the simplex volumes, we may potentially need exponentially many (in the size of~$X$) iterations before we arrive at a crossing partition. %We were unable to prove an upper bound better of than $2^n$ operations.
We leave it as an interesting open problem to prove or disprove that there is always a way to invoke the \emph{Fixing Pairs} operation only polynomially (or least subexponentially) many times in order to arrive at a partition required by Theorem~\ref{thm:main}.

%%
%% Bibliography
%%

%% Please use bibtex, 

\bibliography{paper_references}

\end{document}